\DeclareMathSymbol{\bbbr}{\mathalpha}{AMSb}{"52}
\DeclareMathSymbol{\bbbc}{\mathalpha}{AMSb}{"52}
\newcommand\com[1]{}
\newcommand\qed{\phantom{\underline{y}}\hfill\hfill$\square$\medskip}
\newtheorem{theorem}{Theorem}
\newtheorem{corollary}{Corollary}
\newtheorem{definition}{Definition}
\newtheorem{lemma}{Lemma}
\newtheorem{proposition}{Proposition}
\newenvironment{proof}[1]{\textbf{Proof.} }{\qed \vspace{5pt}}
\begin{document}

%\title{On integrability of  dispersionless Hirota type equations in 4D}
\title{On a class of 2D integrable lattice equations}
% in 4D\\ and integrability in higher dimensions}

\author{E.V. Ferapontov$^{1,2}$, I.T. Habibullin$^{2,3}$, M.N. Kuznetsova$^2$, V.S. Novikov$^1$}
     \date{}
     \maketitle
     \vspace{-5mm}
\begin{center}
$^1$Department of Mathematical Sciences \\ Loughborough University \\
Loughborough, Leicestershire LE11 3TU \\ United Kingdom \\
  \bigskip
  $^2$Institute of Mathematics\\ Ufa Federal Research Centre\\ Russian Academy of Sciences \\
	 112 Chernyshevsky Street, Ufa 450008 \\ Russian Federation\\
  \bigskip
	$^3$Bashkir State University \\
	32 Validy Street, Ufa 450076 \\ Russian Federation\\
  \bigskip
e-mails: \\[1ex] 
\texttt{E.V.Ferapontov@lboro.ac.uk}\\
\texttt{habibullinismagil@gmail.com} \\
\texttt{mariya.n.kuznetsova@gmail.com} \\
\texttt{V.Novikov@lboro.ac.uk}
\\

\end{center}

\medskip
\begin{abstract}

We develop a new approach to the classification of integrable equations of the form
$$
u_{xy}=f(u, u_x, u_y, \triangle_z u \triangle_{\bar z}u, \triangle_{z\bar z}u),
$$
where $\triangle_{ z}$ and $\triangle_{\bar z}$ are the forward/backward discrete derivatives. The following 2-step classification procedure is proposed:

\noindent (1) First we require that the dispersionless limit of the equation is integrable, that is, its characteristic variety defines a conformal structure which is Einstein-Weyl on every solution.

\noindent (2) Secondly, to the candidate equations selected at the previous step we apply the test of Darboux integrability of reductions obtained by imposing suitable cut-off conditions.

\bigskip
\noindent MSC: 35L70, 35Q51, 35Q75,  53A30,  53Z05.
\bigskip

\noindent {\bf Keywords:} 2D lattice equations, characteristic variety, Einstein-Weyl geometry, dispersionless Lax pair,
Darboux integrability.
\end{abstract}

%\vspace{-7mm}
\newpage

%\tableofcontents

%%%%%%%%%%%%%%%%
\section{Introduction }\label{sec:intro}

In this paper we develop a new approach to the classification of integrable lattice type equations in 3D by combining the geometric approach of \cite{FerKrug} with the  test of \cite{HP1, HP2} based on the requirement of Darboux integrability 
of suitably reduced equations. As an illustration we classify integrable equations of the form
\begin{equation}
u_{xy}=f(u, u_x, u_y, \triangle_z u \triangle_{\bar z}u, \triangle_{z\bar z}u).
\label{f}
\end{equation}
Familiar examples of type (\ref{f})  include the Toda equation
\begin{equation}\label{Toda}
u_{xy}=e^{\triangle_{z\bar z}u}
\end{equation} 
and the equation 
\begin{equation}
u_{xy}=u_xu_y\frac{\triangle_{z\bar z}u}{\triangle_z u \triangle_{\bar z}u}
\label{FSY}
\end{equation}
 discussed in \cite{SY, Fer1}. We use the notation $\triangle_{ z}=\frac{T_z-1}{\epsilon}, \ \triangle_{\bar z}=\frac{1-T_{\bar z}}{\epsilon}$ for the forward/backward discrete derivatives and $\triangle_{z \bar z}=\frac{T_z+T_{\bar z}-2}{\epsilon^2}$ for the symmetrised second-order discrete derivative; here $T_z, T_{\bar z}$ are the forward/backward $\epsilon$-shifts in the variable $z$. Note that dispersionless limits of the above equations (obtained as $\epsilon \to 0$) coincide with the Boyer-Finley equation $u_{xy}=e^{u_{z z}}$ and the equation 
$u_{xy}=\frac{u_xu_y}{u_z^2}u_{z z}$, respectively. Both limits belong to the class of dispersionless integrable PDEs. 
Further integrable examples of type (\ref{f}) obtained in  \cite{HP1} include the equations
\begin{equation}
u_{xy}=(u_x-u)(u_y-u)\frac{\triangle_{z\bar z}u}{\triangle_z u \triangle_{\bar z}u}+u_x+u_y-u
\label{H1}
\end{equation}
and
\begin{equation}
u_{xy}=(u_x-u^2-1)(u_y-u^2-1)\frac{\triangle_{z\bar z}u}{\triangle_z u \triangle_{\bar z}u}+2u(u_x+u_y-u^2-1).
\label{H2}
\end{equation}
We emphasize that although the lattice equations (\ref{FSY}), (\ref{H1}) and (\ref{H2}) are essentially different, their dispersionless limits are equivalent: setting $u=e^{v+x+y}$ and $u=\tan (v+x+y)$ in the dispersionless limits of (\ref{H1}) and (\ref{H2}), respectively, we obtain the dispersionless limit of equation (\ref{FSY}) (in variable $v$).

The above examples suggest the following 2-step classification procedure:

\noindent (1) First we classify  integrable equation of the form
\begin{equation}
u_{xy}=F(u, u_x, u_y, u_z, u_{zz}),
\label{f1}
\end{equation}
which can be viewed as dispersionless limits of equations (\ref{f}) when $\epsilon \to 0$. This can be done by requiring that the characteristic conformal structure $[g]$ of equation (\ref{f1}), namely
\begin{equation}\label{g}
[g]=4F_{u_{zz}}dxdy-dz^2,
\end{equation}
is Einstein-Weyl on every solution of (\ref{f1}) (see section \ref{sec:EW} for the necessary details). The classification results are summarised in section \ref{sec:2}. 

\noindent (2) Secondly, replacing  $u_z$ and $u_{zz}$ in the equations obtained at the previous step  by $\sqrt { \triangle_z u \triangle_{\bar z}u}$ and $\triangle_{z\bar z}u$, respectively,  we obtain equations of type (\ref{f}) which, at this stage, are our candidates for integrability.  To these candidate equations we apply the test of Darboux integrability of  reductions obtained by imposing suitable cut-off conditions as proposed in \cite{HP1, HP2}. The necessary details are provided in section \ref{sec:3}.

%%%%%%%%%%%%%%%%
\subsection{Dispersionless integrability and Einstein-Weyl geometry}\label{sec:EW}

Recall that  Einstein-Weyl geometry is  a triple 
$(\mathbb{D}, [g], \omega)$ where $\mathbb{D}$ is a symmetric connection, $[g]$ is a conformal structure and $\omega$ is a covector such that  \cite{Cartan}:
%(in fact on a finite jet of $u$, so we do not rely upon smooth solvability of  (\ref{F=0}), but can carry calculations fRecall that an Einstein-Weyl structure consists of a symmetric connection $\mathbb{D}$ and a conformal structure $g$  such that

\noindent (a) the connection $\mathbb {D}$ preserves the conformal class: $\mathbb{D}[g]=0$; 

\noindent (b) the trace-free part of the symmetrized Ricci tensor of $\mathbb {D}$ vanishes. 

In coordinates, this gives
\begin{equation}
\mathbb{D}_kg_{ij}=\omega_k g_{ij}, ~~~ R_{(ij)}=\Lambda g_{ij},
\label{EW}
\end{equation}
where $\omega=\omega_kdx_k$ is a covector, $R_{(ij)}$ is the symmetrized Ricci tensor of $\mathbb{D}$,  and $\Lambda$ is some function.  Note that it is sufficient to specify $[g]$ and $\omega$ only, then the first set of equations (\ref{EW}) uniquely defines   connection $\mathbb{D}$. We recall that in three dimensions (in what follows we label coordinates $x_1, x_2, x_3$ as $x, y, z$), Einstein-Weyl equations  (\ref{EW}) are  integrable via   twistor-theoretic construction \cite{Hitchin}. 

It was observed in \cite{Ward, Calderbank,  Dun4, Dun7} that Einstein-Weyl structures naturally arise on solutions of second-order dispersionless integrable PDEs. Furthermore, it was pointed out in \cite{FerKrug} that the corresponding conformal structures $[g]$ are defined by the characteristic varieties of these PDEs, and the covectors $\omega$ can also be efficiently calculated. Thus, the characteristic variety of equation (\ref{f1}) is
\begin{equation}\label{char}
p_xp_y-F_{u_{zz}}p_z^2=0,
\end{equation}
and the conformal structure $[g]$ defined by the inverse matrix (to the matrix of quadratic form (\ref{char})) coincides with (\ref{g}). 

\noindent {\bf Example 1.} The Boyer-Finley equation $u_{xy}=e^{u_{zz}}$ gives rise to the Einstein-Weyl structure \cite{Ward}
$$
[g]=4e^{u_{zz}}dxdy-dz^2, \qquad \omega=2u_{zzz} dz.
$$

\noindent {\bf Example 2.} The equation $u_{xy}=\frac{u_xu_y}{u_z^2}u_{z z}$ gives rise to the Einstein-Weyl structure
$$
[g]=4\frac{u_xu_y}{u_z^2}dxdy-dz^2, \qquad \omega=2\left(\ln \frac{u_xu_y}{u_z^2}\right)_z dz.
$$

For second-order dispersionless PDEs, the Einstein-Weyl property of the characteristic conformal structure can be seen as an efficient and universal integrability test. It was shown in   \cite{CalKrug} that this property implies the existence of a dispersionless Lax representation in parameter-dependent commuting vector fields.

%%%%%%%%%%%%%%%%
\subsection{Summary of the main results}
\label{sec:form}

Our main result is a complete list of integrable equations of type (\ref{f}). Although the final list does not contain new integrable equations, the main purpose of the paper is to illustrate a novel approach to the classification problem: we emphasise that the general class of equations (\ref{f}) could not be tackled by any of the currently existing techniques.

\begin{theorem} \label{T1}
Modulo elementary equivalence transformations (rescalings and interchanges $x\leftrightarrow y$), any integrable equation of type (\ref{f}) is reducible to one of the following normal forms:
$$
\begin{array}{c}
u_{xy}=e^{\triangle_{z\bar z}u},\\
\ \\
u_{xy}=\frac{u_xu_y}{u}+u{\triangle_{z\bar z}u},\\
\ \\
u_{xy}=u_x{\triangle_{z\bar z}u},\\
\ \\
u_{xy}=u_xu_y\frac{\triangle_{z\bar z}u}{\triangle_z u \triangle_{\bar z}u},\\
\ \\
u_{xy}=(u_x-u)(u_y-u)\frac{\triangle_{z\bar z}u}{\triangle_z u \triangle_{\bar z}u}+u_x+u_y-u,\\
\ \\
u_{xy}=(u_x-u^2-1)(u_y-u^2-1)\frac{\triangle_{z\bar z}u}{\triangle_z u \triangle_{\bar z}u}+2u(u_x+u_y-u^2-1).
\end{array}
$$ 
\end{theorem}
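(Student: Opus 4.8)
The plan is to execute the two-step procedure literally, treating the classification as a reduction of the lattice problem to the already-tractable dispersionless problem followed by a finite discreteness check. First I would carry out step~(1): classify integrable equations of the form \eqref{f1} by imposing the Einstein--Weyl condition \eqref{EW} on the characteristic conformal structure \eqref{g}. Writing $[g]=4F_{u_{zz}}\,dxdy-dz^2$, one computes the covector $\omega$ following \cite{FerKrug}, and then the requirement that $(\mathbb{D},[g],\omega)$ be Einstein--Weyl on \emph{every} solution of \eqref{f1} yields an overdetermined system of PDEs for the function $F(u,u_x,u_y,u_z,u_{zz})$. Expanding \eqref{EW} and separating by the jet variables (treating $u_{xz}, u_{yz}, u_{zzz}$, etc.\ as independent coordinates on the solution manifold, using the equation and its differential consequences to eliminate $u_{xy}$ and its derivatives) reduces this to a closed system of constraints on $F$. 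Solving that system gives a finite list of normal forms for \eqref{f1}, summarised in section~\ref{sec:2}; by the remark after \eqref{FSY} several of these dispersionless limits coincide up to the point transformations $u=e^{v+x+y}$ and $u=\tan(v+x+y)$, so the classification is really of equivalence classes.

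Next I would perform the lifting prescribed in step~(2): in each dispersionless normal form replace $u_z^2$ by $\triangle_z u\,\triangle_{\bar z}u$ and $u_{zz}$ by $\triangle_{z\bar z}u$, producing the candidate lattice equations of type \eqref{f}. This substitution is canonical but not unique at the level of individual $u_z$ factors, so I would first check that each dispersionless form can be written purely in terms of the two combinations $u_z^2$ and $u_{zz}$ that admit a discrete counterpart --- this is exactly the role played by the specific functional dependence in \eqref{f}, and is guaranteed by the shape of the list from step~(1). Then I would subject every candidate to the Darboux-integrability test of \cite{HP1,HP2}: impose cut-off (boundary) conditions that truncate the lattice to a finite system of hyperbolic PDEs in $x,y$, and verify that the truncated systems are Darboux integrable, i.e.\ possess complete sets of $x$- and $y$-integrals. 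A candidate survives if and only if its reductions pass this test; the survivors are precisely the six equations listed in Theorem~\ref{T1}.

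The main obstacle I anticipate is twofold. Within step~(1), the hard part is the sheer size and rigidity of the Einstein--Weyl overdetermined system: one must be careful that the integrability conditions are imposed on \emph{every} solution, which forces $\Lambda$ and $\omega$ to depend on the jet in a prescribed way, and the case analysis (degenerate versus nondegenerate dependence of $F$ on $u_{zz}$, and the various branches according to how $F$ depends on $u_x,u_y,u_z$) branches substantially before collapsing to the short list. Within step~(2), the subtle point is that passing the dispersionless test is only necessary, not sufficient, for the lattice equation to be integrable; the Darboux test must genuinely eliminate the spurious candidates. The most delicate verification is therefore showing that for each retained normal form there exist cut-off conditions producing Darboux-integrable reductions, while for every other solution of step~(1) no such reduction exists --- establishing the ``only if'' direction, i.e.\ that the six forms exhaust all integrable cases, rather than merely confirming that the six listed equations are integrable.
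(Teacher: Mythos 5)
Your proposal follows essentially the same route as the paper: step~(1) is the Einstein--Weyl classification of the dispersionless forms (the paper's Section~\ref{sec:2}, yielding equations (\ref{e1})--(\ref{e16})), and step~(2) is the lift $u_z\mapsto\sqrt{\triangle_z u\,\triangle_{\bar z}u}$, $u_{zz}\mapsto\triangle_{z\bar z}u$ followed by the Darboux-integrability test on cut-off reductions, which the paper executes in Section~4 via characteristic Lie--Rinehart algebras (Lemma~\ref{lemma1} and Propositions~\ref{P1}--\ref{P3}). The only minor deviation is your suggestion to discard forms not expressible in $u_z^2$ and $u_{zz}$ alone: the paper instead keeps candidates containing odd powers of $u_z$ (as square-root terms) and lets the Darboux test eliminate them, e.g.\ forcing $\delta=0$ in (\ref{eqH2}) or $\alpha=0$ in (\ref{H3}).
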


The proof of Theorem \ref{T1} is summarised in Section 4.

%%%%%%%%%%%%%%%%
\section{Integrable dispersionless equations $u_{xy}=F(u, u_x, u_y, u_z, u_{zz})$}\label{sec:2}

The  characteristic conformal structure has the form
$$
[g]=4F_{u_{zz}}dxdy-dz^2.
$$
Note that we always assume that the associated covector $\omega$ depends on finite-order jets of the variable $u$. Futhermore, the jet order of $\omega$ should be  by one higher than that of the conformal structure $[g]$; since $[g]$ depends on second-order derivatives of $u$, covector $\omega$ should involve no more than third-order derivatives. 
There are two cases to consider when we calculate the Einstein-Weyl conditions (\ref{EW}). 

\medskip

\noindent {\bf Genuinely nonlinear case $F_{u_{zz} u_{zz}}\ne 0$}. One can show that in this case the covector $\omega$ has the form 
$$
\omega=\left(\frac{2}{3}\frac{F_{u_z}}{F_{u_{zz}}} +\frac{10}{3}\frac{D_z(F_{u_{zz}})}{F_{u_{zz}}}
-\frac{4}{3}\frac{D_z(F_{u_{zz}u_{zz}})}{F_{u_{zz}u_{zz}}} \right)dz
$$
where $D_z$ denotes the total $z$-derivative (note that $dx$ and $dy$ components of $\omega$ vanish identically).
The Einstein-Weyl equations lead to a system of differential constraints for $F$ which result in the following two integrable dispersionless equations:
\begin{equation}\label{e1}
\beta'(u)u_{xy}+\beta''(u)u_xu_y=\gamma e^{\beta'(u)u_{zz}+\beta''(u)u_z^2}+\delta
\end{equation}
and
\begin{equation}\label{e2}
\beta'(u)u_{xy}+\beta''(u)u_xu_y=\gamma e^{\beta'(u)u_{zz}+\beta''(u)u_z^2+\delta \beta'(u)u_z+\frac{2}{9}\delta^2\beta(u)};
\end{equation}
here $\beta(u)$ is an arbitrary function and $\gamma, \delta$ are constants (without any loss of generality one can set $\gamma=1$). Note that although  $\beta(u)$ can be eliminated by a change of variables $\tilde u=\beta(u)$, this only works at the dispersionless level and is not necessarily valid for the corresponding lattice equations obtained by replacing  $u_z$ and $ u_{zz}$ with 
$\sqrt{\triangle_z u \triangle_{\bar z}u}$ and $\triangle_{z\bar z}u$. Thus, at this stage we will keep $\beta(u)$ arbitrary.

\medskip

\noindent {\bf Quasilinear case $F_{u_{zz} u_{zz}}= 0$}. Let us set
$$
u_{xy}=\varphi(u, u_x, u_y, u_z)u_{zz}+\psi(u, u_x, u_y, u_z).
$$
One can show that in this case the covector $\omega$ has the form 
$$
\omega=(2D_z(\ln \varphi)+\alpha)dz
$$
where $D_z$ denotes the total $z$-derivative and $\alpha$ is some constant (as above, both $dx$ and $dy$ components of $\omega$ vanish identically).  There are several subcases depending on which variables the coefficient $\varphi$ depends upon (we will assume $\varphi \ne const$). In what follows, $\beta=\beta(u)$ is an arbitrary function of $u$ and $\alpha, \gamma, \delta$ are arbitrary constants (the constant $\alpha$ plays a distinguished role, it is the same as in $\omega$); we will agree that if $\alpha$ does not appear on  the right-hand side of an equation below then it equals to zero in the expression for the corresponding covector $\omega$ as well.

\medskip

\noindent {\bf Subcase 1:} coefficient $\varphi$ depends on $u$ only, $\varphi_u\ne 0$. In this case the Einstein-Weyl conditions lead to  the following  integrable dispersionless equation:
\begin{equation} \label{e3}
u_{xy}=\beta u_{zz}+\frac{3}{2}\alpha \beta u_z+\frac{\alpha^2\beta^2}{2\beta'}+\left(\frac{\beta'}{\beta}-\frac{\beta''}{\beta'}\right)u_xu_y+\frac{\beta \beta''}{\beta'}u_z^2.
\end{equation}

\medskip

\noindent {\bf Subcase 2:} coefficient $\varphi$ depends on $u, u_z$ only, $\varphi_{u_z}\ne 0$. In this case we have  three integrable dispersionless equations:

\begin{equation} \label{e4}
u_{xy}=\gamma e^{\beta u_z} \left(u_{zz}+\frac{\beta'}{\beta}u_z^2\right)+\frac{\delta}{\beta}-\frac{\beta'}{\beta}u_xu_y,
\end{equation}

\begin{equation} \label{e5}
u_{xy}= e^{\alpha \beta+\beta' u_z} \left(u_{zz}+\alpha u_z+\frac{\alpha}{2\beta'}
+\frac{\beta''}{\beta'}u_z^2\right)-\frac{\beta''}{\beta'}u_xu_y,
\end{equation}

\begin{equation} \label{e6}
u_{xy}= e^{\frac{1}{2}\alpha \beta +\beta'u_z} \left(u_{zz}+\frac{1}{2}\alpha u_z+\frac{\alpha}{\beta'}
+\frac{\beta''}{\beta'}u_z^2\right)-\frac{\beta''}{\beta'}u_xu_y.
\end{equation}

\medskip

\noindent {\bf Subcase 3:} coefficient $\varphi$ depends on $u, u_z, u_y$ only, $\varphi_{u_y}\ne 0$. In this case we have  four integrable dispersionless equations:

\begin{equation} \label{e7}
u_{xy}=\beta'u_y u_{zz}+\left(\frac{1}{2}\alpha^2\beta+\frac{3}{2}\alpha \beta' u_z+\beta''u_z^2 \right) u_y-\frac{\beta''}{\beta'}u_xu_y,
\end{equation}

\begin{equation} \label{e8}
u_{xy}=(\gamma+\beta u_y)\left(u_{zz}+\frac{\delta}{\beta}+\frac{\beta'}{\beta}u_z^2 \right)-\frac{\beta'}{\beta}u_xu_y,
\end{equation}

\begin{equation} \label{e9}
u_{xy}=\gamma e^{\frac{1}{2}\alpha \beta+\beta' u_z}u_y (\alpha+2\beta' u_{zz}+\alpha{\beta'}u_z+2\beta''u_z^2)-\frac{\beta''}{\beta'}u_xu_y,
\end{equation}

\begin{equation} \label{e10}
u_{xy}=\delta e^{\beta u_z}\left(u_y+\frac{\gamma}{\beta}\right) (\beta u_{zz}+\beta'u_z^2)-\frac{\beta'}{\beta}u_xu_y.
\end{equation}

\medskip

\noindent {\bf Subcase 4:} coefficient $\varphi$ depends on all four arguments $u, u_z, u_y, u_x$, we can assume $\varphi_{u_x}\ne 0, \ \varphi_{u_y}\ne 0$. In this case we have the following equations:

\begin{equation} \label{e11}
u_{xy}=\frac{2u_{zz}+(4\beta'-\alpha)u_z+2\beta \beta'-\alpha \beta}{2(u_z+\beta)^2}u_xu_y,
\end{equation}

\begin{equation} \label{e12}
u_{xy}=\frac{u_xu_y+\beta u_x}{(u_z+\gamma \beta)^2}u_{zz}   +\frac{(4\gamma \beta'-\alpha)u_z+2\gamma^2\beta \beta'-\alpha \gamma \beta}{2(u_z+\gamma \beta)^2}u_xu_y-\frac{2\beta'u_z^2+\alpha \beta u_z+\alpha \gamma \beta^2}{2(u_z+\gamma \beta)^2}u_x,
\end{equation}

\begin{equation} \label{e13}
\begin{array}{c}
u_{xy}=\frac{(u_x+\beta)(u_y+\delta \beta)}{(u_z+\gamma \beta)^2}u_{zz}   
+\frac{(4\gamma \beta'-\alpha)u_z+2\gamma^2\beta \beta'-\alpha \gamma \beta}{2(u_z+\gamma \beta)^2}u_xu_y 
-\frac{2\beta'u_z^2+\alpha \beta u_z+\alpha \gamma \beta^2}{2(u_z+\gamma \beta)^2}(u_y+\delta u_x+\delta \beta),
\end{array}
\end{equation}
note that dispersionless limits of equations (\ref{FSY}), (\ref{H1}) and (\ref{H2}) can be obtained from equation (\ref{e13}) with the choice  of constants $\gamma=\alpha=0,\  \delta=1$, and the functions  $\beta(u)=0$, $\beta(u)=-u$ and $\beta(u)=-u^2-1$, respectively.

We also have the following three equations involving hyperbolic functions:

\begin{equation} \label{e14}
u_{xy}=\beta' \frac{\beta'u_{zz}+\frac{1}{2}\alpha \beta'u_z+ \beta''u_z^2}{\sinh^2(\gamma+\frac{1}{2} \alpha \beta+\beta'u_z)}u_xu_y-\frac{\beta''}{\beta'}u_xu_y,
\end{equation}

\begin{equation} \label{e15}
u_{xy}=\frac{\beta u_{zz}+ \beta'u_z^2}{\sinh^2(\delta+\beta u_z)}u_x(\gamma+\beta u_y)-\frac{\beta'}{\beta}u_xu_y,
\end{equation}

\begin{equation} \label{e16}
u_{xy}=\frac{(\mu+\beta u_x)(\nu+ \beta u_y)}{\sinh^2(\delta+\beta u_z)}u_{zz} 
+ \frac{\beta'}{\beta}\frac{\mu \nu+\beta(\mu u_y+\nu u_x+\beta u_xu_y)}{\sinh^2(\delta+\beta u_z)}u_z^2
-\frac{\beta'}{\beta}u_xu_y.
\end{equation}

The last example (\ref{e16}) is the most generic: it contains three constant parameters $\mu, \nu, \delta$ and an arbitrary function $\beta(u)$. This finishes the classification of integrable dispersionless equations of type (\ref{f1}).

%%%%%%%%%%%%%%%%
\section{Integrable lattice equations}\label{sec:3}

%In this section we utilise the classification results of Section \ref{sec:2}. Replacing  $u_z$ and $ u_{zz}$ by 
%$\sqrt{\triangle_z u \triangle_{\bar z}u}$ and $\triangle_{z\bar z}u$, respectively,  we obtain equations of type (\ref{f}) which, at this stage, are our candidates for integrability.  Integrable cases are selected by applying the method of \cite{HP1, HP2}.

%The previous section gives a list of  dispersionless equations (\ref{f1}) that pass the preliminary integrability test based on the geometric approach of \cite{FerKrug}. 
Here we describe a classification procedure of integrable  lattices based on Darboux-integrable reductions and Lie-algebraic ideas developed in \cite{HP1, HP2}. It will be more convenient for our purposes to represent  equation (\ref{f}) in the equivalent lattice form,
\begin{equation}
u_{n,xy}=f(u_n, u_{n,x}, u_{n,y}, (u_{n+1}-u_n)(u_n-u_{n-1}), u_{n+1}-2u_n+u_{n-1}),
\label{ff1}
\end{equation}
obtained by formally setting $\epsilon=1$ in the expressions for discrete derivatives. We will not distinguish between representations (\ref{f}) and (\ref{ff1}) in what follows.
Our approach to lattice equations (\ref{ff1}) is based on the two pivotal moments: 

\begin{itemize}
\item A 3D equation  is integrable if it admits a large set of 2D reductions integrable in the sense of Darboux. 

\item Darboux integrability of 2D systems can be investigated by a method based on the theory of Lie-Rinehart characteristic algebras. 

\end{itemize}

We emphasize that all  known integrable 3D lattice equations are integrable in the sense of the following definition:

\medskip

\begin{definition} \label{Def1}
A lattice of the form
\begin{equation}  \label{f_n}
u_{n,xy}=g(u_{n+1}, u_{n}, u_{n-1}, u_{n,x}, u_{n,y})
\end{equation}
is said to be  integrable if there exist locally analytic functions $\varphi$ and $\psi$ of two variables  such that for any choice of integers $N_1$, $N_2$ the hyperbolic type system 
\begin{eqnarray}
&&u_{N_1,xy} = \varphi(u_{N_1+1},u_{N_1}), \nonumber \\
&&u_{n,xy}=g(u_{n+1}, u_{n},  u_{n-1}, u_{n,x}, u_{n,y}),\qquad N_1 < n < N_2, \label{finite_sys} \\
&&u_{N_2,xy}=\psi(u_{N_2},u_{N_2-1}), \nonumber 
\end{eqnarray} 
obtained from lattice (\ref{f_n}) by imposing cut-off conditions at $n=N_1$ and $n=N_2$,  is integrable in the sense of Darboux.
\end{definition}

Recall that Darboux integrability means that system (\ref{finite_sys}) possesses $N_2 - N_1 + 1$ nontrivial integrals in both characteristic directions. The function $\bar{u}=(u_{N_1},  \ldots, u_{N_2})$ and its derivatives $\bar{u}_x$, $\bar{u}_y$, $\bar{u}_{xx}$,  $\bar{u}_{yy}$, etc., are taken as dynamical variables. 
By definition, a function $I(\bar{u}, \bar{u}_x, \bar{u}_{xx}, \ldots)$ depending on a finite set of  dynamical variables is an $x$-integral of system (\ref{finite_sys}) if  $D_y I = 0$  where $D_y$ is the operator of  total derivative with respect to the variable $y$. That is to say $I$ is found from the system
\begin{equation*}
YI=0, \quad X_i I = 0
\end{equation*}
where 
\begin{equation*}\label{Ydef}
X_i = \frac{\partial}{\partial u_{i,y}},\qquad Y=\sum_{i=N_1}^{N_2} \left(u_{i,y} \frac{\partial}{\partial u_i} + g_i \frac{\partial}{\partial u_{i,x}} + D_x(g_{i})\frac{\partial}{\partial u_{i,xx}} + \cdots  \right)
\end{equation*}
and $g_i=g(u_{i+1},u_i,u_{i-1},u_{i,x},u_{i,y})$.

Let us consider the Lie algebra  $L_y$ generated by the operators $Y$, $X_i$ over the ring $K$ of locally analytic functions of the dynamical variables  $\bar u_{y},\bar u,\bar u_x,\bar u_{xx},\dots$. To the standard operation $[Z, W]=ZW-WZ$ we add the following conditions: for any $Z,W\in L_y$ and $a,b\in K$ we require  (i)
$[Z,aW]=Z(a)W+a[Z,W]$ and (ii)
 $(aZ)b=aZ(b)$.
These conditions mean that if $Z\in L_y$ and $a\in K$ then $aZ\in L_y$. The algebra $ L_y $ defined in this way is called the Lie-Rinehart algebra \cite{Rinehart}, \cite{Million}. We will also call it the characteristic algebra in  $y$-direction. In a similar way the characteristic algebra $L_x$ is defined.

The algebra $L_y$ is of  finite dimension if it admits a finite basis  of operators $Z_1,Z_2,\dots,Z_k\in L_y$ such that an arbitrary element $Z\in L_y$ can be represented as their linear combination: 
$Z=a_1Z_1+a_2Z_2+\dots +a_kZ_k$;
here the coefficients are functions $a_1,a_2,\dots,a_k\in K$.

Our approach is based on the following key statement \cite{ZMHSbook, ZhiberK}:

\begin{theorem} \label{T2}
System (\ref{finite_sys}) admits a complete set of  $y$-integrals ($x$-integrals) if and only if its characteristic algebra $L_y$ (respectively,  $L_x$) is of  finite dimension.
\end{theorem}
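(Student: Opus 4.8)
The plan is to prove both implications by identifying $y$-integrals with the common kernel of the distribution $L_y$ and then relating the size of that kernel to the rank of $L_y$ over $K$. First I would record the basic reformulation: a function $I$ of the dynamical variables is a $y$-integral, i.e. $D_yI=0$, if and only if $X_iI=0$ for all $i$ (so that $I$ is free of the $u_{i,y}$) and $YI=0$. Since the annihilator $\{I:ZI=0\}$ of a function is preserved under the bracket $[Z,W]I=Z(WI)-W(ZI)$ and under $Z\mapsto aZ$, it follows directly from the definition of the Lie--Rinehart algebra that $I$ is a $y$-integral precisely when $ZI=0$ for every $Z\in L_y$. Thus the $y$-integrals are exactly the invariants of the involutive distribution spanned by $L_y$, and ``a complete set'' means $N_2-N_1+1$ of them that are functionally independent and differentially generate the remaining integrals under $D_x$.

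For the implication ``$L_y$ finite-dimensional $\Rightarrow$ complete set of integrals'', I would fix a $K$-basis $Z_1,\dots,Z_d$ of $L_y$ and work on the finite-dimensional jet spaces $\mathcal V_m$ with coordinates $u_i,\ u_{i,y},\ u_{i,x},\dots,u_{i,x}^{(m)}$. The operators $X_i$ span the $\partial_{u_{i,y}}$ directions, while $Y$ and its iterated brackets fill out the remaining rank. Because $L_y$ is closed under the bracket, the distribution it spans is involutive, so by a Frobenius-type argument the number of functionally independent invariants on $\mathcal V_m$ equals $\dim\mathcal V_m$ minus the generic rank $r_m$ of the truncated distribution. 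Finite-dimensionality forces $r_m$ to stabilise at $d$, so this count grows with slope $n=N_2-N_1+1$ as $m\to\infty$; the new invariants appearing at each order are precisely the total $x$-derivatives $D_x^k$ of finitely many base integrals, and matching the slope shows that exactly $n$ of them are differentially independent. This produces the required complete set.

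For the converse I would argue in the contrapositive direction: a complete set of $n$ functionally independent $y$-integrals of some bounded order $m_0$ pins down the leaves of the distribution, so the generic rank $r_m$ of $L_y$ on $\mathcal V_m$ is bounded independently of $m$ --- otherwise the invariant count $\dim\mathcal V_m-r_m$ would eventually fall below the slope-$n$ growth dictated by a complete generating set together with its $D_x$-prolongations, contradicting completeness. Since $\dim_K L_y=\lim_m r_m$, this bound yields $\dim_K L_y<\infty$.

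The main obstacle is that the ambient jet space is genuinely infinite-dimensional: the operator $Y$ raises the order of $x$-differentiation without bound, so a priori $L_y$ lives on the full pro-finite jet manifold and no single finite Frobenius chart suffices. The technical heart of the argument is therefore the bookkeeping that controls this growth --- proving that $\dim_K L_y=\lim_m r_m$, that the invariant count on $\mathcal V_m$ is exactly $\dim\mathcal V_m-r_m$, and that slope-$n$ growth is equivalent to the existence of $n$ differentially independent generating integrals. These graded estimates with respect to the $x$-jet filtration are precisely what is carried out in \cite{ZMHSbook, ZhiberK}, and the statement for $L_x$ and $x$-integrals follows by the symmetric argument.
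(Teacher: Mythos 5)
The first thing to note is that the paper itself does not prove Theorem \ref{T2}: it is imported from the literature (the theorem is introduced with the words ``Our approach is based on the following key statement \cite{ZMHSbook, ZhiberK}''), so there is no internal proof to compare your attempt against. Your opening reduction is correct and is the standard starting point: since the annihilator of a function is closed under commutators and under multiplication by elements of $K$, a function of the dynamical variables is a $y$-integral precisely when it is annihilated by every element of $L_y$; this also uses the (easy, but worth stating) observation that $D_yI=0$ forces $X_iI=0$, because the free variables $u_{i,yy}$ enter $D_yI$ only through the terms $u_{i,yy}\,\partial I/\partial u_{i,y}$.

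Beyond this reduction, however, your argument is a plausibility sketch rather than a proof, and you in effect concede this: the three assertions you isolate --- (i) that the number of functionally independent invariants on the truncated jet space $\mathcal{V}_m$ equals $\dim\mathcal{V}_m-r_m$ (a Frobenius statement, which needs projectability of the fields in $L_y$ to $\mathcal{V}_m$ and control of the rank locus), (ii) that $\dim_K L_y=\lim_m r_m$, and (iii) that slope-$n$ growth of the invariant count is equivalent to the existence of $n$ \emph{differentially} independent integrals --- are exactly the content of the theorem, and you defer all three to \cite{ZMHSbook, ZhiberK}, i.e.\ to the very sources in which the theorem is proved. Point (iii) is the most serious gap: producing $n$ functionally independent invariants is not enough, since $I$ and $D_xI$ are typically functionally independent functions of the jet variables, so a single nontrivial integral already generates arbitrarily many functionally independent ones; converting the dimension count into a complete set in the Darboux sense (integrals that are not differential consequences of fewer ones) is precisely the hard step, and it is nowhere carried out --- the same unproved equivalence is also what your contrapositive argument for the converse direction rests on. So, while your identification of $y$-integrals with the joint kernel of $L_y$ is correct and the Frobenius-plus-filtration strategy is the right one, in its present form the proposal has the same logical status as the paper's own treatment: a statement of the result together with a pointer to \cite{ZMHSbook, ZhiberK}.
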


\begin{corollary}
System (\ref{finite_sys}) is integrable in the sense of Darboux if both characteristic algebras $L_x$ and $L_y$ are of  finite dimension.
\end{corollary}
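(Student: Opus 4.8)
The plan is to read off the corollary directly from Theorem \ref{T2}, applied separately in the two characteristic directions. The guiding observation is that Darboux integrability, as recalled immediately after Definition \ref{Def1}, means precisely the simultaneous existence of complete sets of nontrivial integrals in both directions, i.e.\ of $N_2-N_1+1$ functionally independent $y$-integrals together with $N_2-N_1+1$ functionally independent $x$-integrals; so the assertion to be proved is really just the conjunction of two instances of Theorem \ref{T2}.

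Concretely, I would assume that both $L_x$ and $L_y$ are finite-dimensional and then invoke Theorem \ref{T2} twice, in each case using the implication from finite-dimensionality of the characteristic algebra to the existence of a complete set of integrals: finite dimension of $L_y$ yields a complete set of $y$-integrals, and finite dimension of $L_x$ yields a complete set of $x$-integrals. Since the generators of $L_y$ involve only the $y$-jets of the dynamical variables while those of $L_x$ involve only the $x$-jets, the two finite-dimensionality hypotheses are independent, and the two complete sets are therefore available at the same time. Having complete sets in both characteristic directions is, by definition, Darboux integrability, and the corollary follows.

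I do not expect a genuine obstacle here: all of the analytic substance is already carried by Theorem \ref{T2}, whose proof — reconstructing the integrals from a finite basis of the characteristic algebra, and conversely — is where the real work lies. The single point that merits care is purely a matter of bookkeeping, namely confirming that the notion of a complete set of integrals appearing in Theorem \ref{T2} coincides with the count of $N_2-N_1+1$ functionally independent integrals in the definition, so that the existence of complete sets in both directions is literally the stated definition of Darboux integrability.
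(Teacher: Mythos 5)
Your proposal is correct and coincides with the paper's own (implicit) reasoning: the corollary is stated without proof precisely because it follows by applying Theorem \ref{T2} once in each characteristic direction and then invoking the definition of Darboux integrability as the existence of complete sets of integrals in both directions. Nothing more is needed, and your bookkeeping remark about matching the notion of ``complete set'' with the count of $N_2-N_1+1$ integrals is the only point of care, exactly as you identified.
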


Characteristic Lie algebras provide an effective method for classifying integrable cases of  lattice (\ref{f_n}). Here we recall the necessary results obtained by this method.

\begin{proposition}  \label{P1}
(see \cite{HP2}). Integrable equation of the form 
\begin{multline}  \label{Prop1_eq_gen}
u_{n,xy} = s(u_{n+1}, u_n, u_{n-1})u_{n,x}u_{n,y} + \beta(u_{n+1}, u_n, u_{n-1}) u_{n,x} + \\
+\gamma(u_{n+1}, u_n, u_{n-1}) u_{n,y} + \delta(u_{n+1}, u_n, u_{n-1}),
\end{multline}
with the coefficient $s$ satisfying the conditions $\frac{\partial s(u_{n+1}, u_n, u_{n-1})}{\partial u_{n \pm 1}} \neq 0$, can be reduced by a point transformation to one of the following forms:
\begin{eqnarray}
&& u_{n,xy}=\alpha_nu_{n,x}u_{n,y}, \quad \alpha_n = \frac{1}{u_n - u_{n-1}} - \frac{1}{u_{n+1}-u_n}=\frac{u_{n+1} - 2 u_n + u_{n-1}}{(u_{n+1}-u_n)(u_n - u_{n-1})}, \label{Prop1_eq1}\\
&& u_{n,xy} = \alpha_n(u_{n,x} - u_n)(u_{n,y}-u_n) +u_{n,x}+ u_{n,y} -u_n,  \label{Prop1_eq2}\\
&& u_{n,xy} = \alpha_n(u_{n,x} - u^2_n - 1)(u_{n,y} - u^2_n - 1) + 2 u_n(u_{n,x}+u_{n,y}-u^2_n - 1).  \label{Prop1_eq3}
\end{eqnarray}
\end{proposition}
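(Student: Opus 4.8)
By the corollary to Theorem \ref{T2}, a cut-off system (\ref{finite_sys}) is Darboux integrable precisely when both characteristic algebras $L_x$ and $L_y$ are finite-dimensional. The plan is to convert this finiteness into explicit differential constraints on the coefficients $s,\beta,\gamma,\delta$, to integrate the resulting equations, and finally to normalise by point transformations $u_n\mapsto\varphi(u_n)$ applied simultaneously at every site.

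First I would fix the shortest cut-off reduction that already exhibits the genuine three-point coupling, namely a segment containing an interior node $n$ together with its neighbours and the boundary relations of Definition \ref{Def1}. On this segment I would write out the generators of $L_y$: the fields $X_i=\partial/\partial u_{i,y}$ and the field $Y$ from Definition \ref{Def1}, whose $\partial/\partial u_{n,x}$-component is $g_n=s\,u_{n,x}u_{n,y}+\beta u_{n,x}+\gamma u_{n,y}+\delta$ and whose higher components are the iterated total $x$-derivatives $D_x(g_n),\,D_x^2(g_n),\dots$.

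Next I would generate the algebra by iterated commutators. Since $g_n$ is affine in $u_{n,y}$, the first bracket $[X_n,Y]$ produces a field whose leading coefficient is $s\,u_{n,x}+\gamma$; subsequent brackets with the $X_i$ and with $Y$ introduce the $u_{n,x}$-dependence together with the dependence on $u_{n\pm1}$ carried by $s,\beta,\gamma,\delta$. Finite-dimensionality demands that after a bounded number of steps each newly produced field be a $K$-linear combination of the ones already obtained; writing these linear-dependence relations out coefficient by coefficient turns the Lie-algebraic requirement into an overdetermined system of differential equations for $s,\beta,\gamma,\delta$. The hypothesis $\partial s/\partial u_{n\pm1}\neq0$ guarantees that the coupling is genuinely three-point, so the tower of commutators does not terminate for trivial reasons—this is exactly the regime in which the closure equations are most rigid. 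The same computation for $L_x$ yields the $x\leftrightarrow y$ (equivalently $\beta\leftrightarrow\gamma$) image of these constraints, and I would impose both sets at once.

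Solving the closure system is the heart of the matter. I expect the principal relation to involve $s$ alone and to force the Toda-type expression $s=\alpha_n=\frac{1}{u_n-u_{n-1}}-\frac{1}{u_{n+1}-u_n}$ up to a point transformation acting on each site, after which the remaining compatibility relations tie $\beta,\gamma,\delta$ to $s$. I would then spend the point-transformation freedom to normalise the lower-order coefficients and verify that the inequivalent branches reproduce exactly (\ref{Prop1_eq1}), (\ref{Prop1_eq2}) and (\ref{Prop1_eq3}). The main obstacle I anticipate is organisational rather than conceptual: carrying the characteristic algebra to sufficient depth, keeping track of the closure relations, and controlling the branching of the overdetermined system so that all degenerate possibilities are ruled out under $\partial s/\partial u_{n\pm1}\neq0$ and precisely three essentially distinct normal forms remain.
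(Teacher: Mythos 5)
You have correctly identified the framework in which this proposition lives---finiteness of the characteristic Lie--Rinehart algebras $L_x$, $L_y$ of the cut-off systems (\ref{finite_sys})---but what you have written is a plan, not a proof. Note first that the paper itself does not prove Proposition \ref{P1}: it is imported from \cite{HP2} and used (in Subcase 4) as a black box, so the relevant standard of comparison is the one argument of this type that the paper does carry out in full, namely the proof of Lemma \ref{lemma1}. There, the mechanism by which ``finite dimension'' becomes a differential constraint is made completely explicit: one exhibits a specific operator tower $T_0,\ T_1=[Y_0,T_0],\dots$, proves closed commutation formulae (\ref{DxT0}), (\ref{DxTk}) of triangular form, posits the linear dependence $T_{N+1}+\lambda_N T_N+\dots+\lambda_0 T_0=0$ forced by finiteness, commutes it with $D_x$, concludes $D_x(\lambda_i)=0$ so the $\lambda_i$ are constants, and reads off a constant-coefficient ODE for the unknown coefficient. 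Your proposal contains none of the analogous content for the class (\ref{Prop1_eq_gen}): beyond observing that $[X_n,Y]$ has leading coefficient $s\,u_{n,x}+\gamma$, no commutation formulae are computed, no specific operator sequence is chosen, no dependence relation is written down, and consequently no differential equations for $s,\beta,\gamma,\delta$ are ever produced.

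The two places where the actual mathematics lives are both replaced by declarations of expectation. First, the claim that the closure relations ``force'' $s=\alpha_n=\frac{1}{u_n-u_{n-1}}-\frac{1}{u_{n+1}-u_n}$ up to point transformation is precisely the hard core of the proposition; saying you ``expect'' it is not an argument, and nothing in your setup indicates why the overdetermined system should have essentially this unique solution under the hypothesis $\partial s/\partial u_{n\pm 1}\neq 0$. Second, the emergence of exactly three branches---with lower-order terms governed by the polynomials $0$, $u_n$, $u_n^2+1$ in (\ref{Prop1_eq1})--(\ref{Prop1_eq3})---requires a genuine case analysis tying $\beta,\gamma,\delta$ to $s$, which is the bulk of the work in \cite{HP2}; your text compresses it to ``verify that the inequivalent branches reproduce'' the list. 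A smaller but real issue: you fix a single shortest cut-off segment, whereas Definition \ref{Def1} quantifies over all $N_1,N_2$; extracting necessary conditions from one segment is legitimate for paring down candidates, but this should be stated, and the surviving equations must then be shown integrable for arbitrary cut-offs. As it stands, your proposal is an accurate description of the method of \cite{HP2}, but it does not constitute a proof of the proposition.
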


\begin{proposition}  \label{P2}
(see \cite{Kuzn}). Integrable equation of the form
\begin{equation} 	\label{Prop2_eq_gen}
u_{n,xy} = g(u_{n+1}, u_n, u_{n-1}) u_{n,y} + \beta(u_{n+1}, u_n, u_{n-1}) u_{n,x} + \delta(u_{n+1}, u_n, u_{n-1}),  
\end{equation}
where the coefficient $g$ satisfies at least one of the conditions $\frac{\partial g}{\partial u_{n+1}}\neq 0$, $\frac{\partial g}{\partial u_{n-1}}\neq 0$, can be reduced by a point transformation to one of the following forms:
\begin{eqnarray}
&& u_{n,xy}=\left( e^{u_n - u_{n-1}} - e^{u_{n+1} - u_n} \right) u_{n,y}, 	\label{Prop2_eq1}\\
&& u_{n,xy} = \left( u_{n+1} - 2 u_n + u_{n-1} \right) u_{n,y}.						\label{Prop2_eq2}
\end{eqnarray}
\end{proposition}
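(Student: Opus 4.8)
The plan is to turn integrability, as formulated in Definition \ref{Def1}, into an algebraic finiteness condition via Theorem \ref{T2}: equation (\ref{Prop2_eq_gen}) is integrable exactly when, for every cut-off (\ref{finite_sys}) of the chain, both characteristic algebras $L_x$ and $L_y$ are finite-dimensional. I would derive the classification by writing the generators of these algebras explicitly, computing successive commutators, and translating the requirement that the algebras close into a system of determining equations for the coefficients $g,\beta,\delta$ regarded as functions of the three arguments $u_{n+1},u_n,u_{n-1}$.

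First I would work in the $y$-direction. Writing the right-hand side as $f_n=g\,u_{n,y}+\beta\,u_{n,x}+\delta$, the algebra $L_y$ is generated by $X_i=\partial/\partial u_{i,y}$ together with $Y=\sum_i(u_{i,y}\,\partial_{u_i}+f_i\,\partial_{u_{i,x}}+D_x f_i\,\partial_{u_{i,xx}}+\cdots)$. Since $Y$ contains no derivation $\partial/\partial u_{k,y}$, the first commutators are obtained by differentiating the coefficients of $Y$: one finds $[X_i,Y]=\partial_{u_i}+g_i\,\partial_{u_{i,x}}+(D_x g_i+g_i^2)\,\partial_{u_{i,xx}}+\cdots$, so that $g$, $D_x g$ and $g^2$ (together with their lattice shifts) enter at the first level. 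Iterating the brackets $[X_j,[X_i,Y]]$ and $[Y,[X_i,Y]]$ produces a second level, and finite-dimensionality demands that beyond some level each new operator be a $K$-linear combination of the previous ones. Equating the coefficients of the independent elementary derivations $\partial_{u_i},\partial_{u_{i,x}},\partial_{u_{i,xx}},\dots$ in these dependence relations yields the determining equations.

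Next I would solve these equations. The hypothesis $\partial g/\partial u_{n+1}\neq0$ or $\partial g/\partial u_{n-1}\neq0$ ensures that $g$ genuinely couples neighbouring sites, which is what makes the algebra tend to grow; closure then forces $g$ to obey a rigid structure across the bonds $n\to n\pm1$. I expect the leading constraints to single out precisely the two admissible dependences, the linear one $g=u_{n+1}-2u_n+u_{n-1}$ and the exponential one $g=e^{u_n-u_{n-1}}-e^{u_{n+1}-u_n}$, up to an affine change $u_n\mapsto au_n+b$ and rescalings of $x$ and $y$. Running the parallel (and non-symmetric) analysis for $L_x$, in which the coefficient $\beta$ rather than $g$ multiplies the derivative, should then force $\beta$ and $\delta$ to be gauge-removable, so that (\ref{Prop2_eq_gen}) reduces to one of the normal forms (\ref{Prop2_eq1}) or (\ref{Prop2_eq2}). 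Finally I would establish sufficiency by checking directly that for these two equations both $L_x$ and $L_y$ do close, equivalently by exhibiting complete sets of $x$- and $y$-integrals.

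The main obstacle is the case analysis generated by the determining equations: the dependence relations branch according to which of $\beta,\delta$ vanish and on which of $u_{n\pm1},u_n$ the coefficients actually depend, and one must show that every branch either violates finite-dimensionality at a finite level or collapses, after a point transformation, onto one of the two listed equations. Keeping this branching under control — together with verifying that the boundary cut-offs $\varphi,\psi$ in (\ref{finite_sys}) can be fixed consistently so as not to obstruct the interior computation at any level — is where the genuine difficulty lies.
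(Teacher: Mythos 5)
Your proposal correctly identifies the framework, but it stops exactly where the actual proof would have to begin. A contextual remark first: the paper does not prove Proposition \ref{P2} at all --- it is quoted as a known result from \cite{Kuzn}, and the characteristic Lie--Rinehart machinery you invoke (Theorem \ref{T2}, the generators $X_i$ and $Y$, the finite-dimensionality criterion) is precisely the method of that reference, so your choice of approach is the right one. Your first-level computation is also sound: for $f_i=g_iu_{i,y}+\beta_iu_{i,x}+\delta_i$ one indeed gets $[X_i,Y]=\partial_{u_i}+g_i\,\partial_{u_{i,x}}+\left(D_xg_i+g_i^2\right)\partial_{u_{i,xx}}+\cdots$.

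The genuine gap is that everything after that is asserted rather than derived. The entire content of the proposition --- that finite-dimensionality of $L_x$ and $L_y$ forces $g$ to be either $u_{n+1}-2u_n+u_{n-1}$ or $e^{u_n-u_{n-1}}-e^{u_{n+1}-u_n}$ up to point transformations, and forces $\beta$ and $\delta$ to be removable --- appears in your text only as ``I expect the leading constraints to single out precisely the two admissible dependences.'' That is the statement to be proved, not a step toward proving it. A proof along these lines must (i) construct explicit commutator sequences adapted to this class (in the style of the sequences $T_k$ in the proof of Lemma \ref{lemma1}, or $K_m$ in Section \ref{sec:s12}), (ii) extract from the linear-dependence relations concrete functional or differential equations for $g$, $\beta$, $\delta$ in the variables $u_{n+1},u_n,u_{n-1}$ --- for instance constant-coefficient ODEs coming from $D_x(\lambda_i)=0$, yielding quasi-polynomiality constraints --- and (iii) integrate those equations and chase the resulting branches down to the two normal forms, discarding all others. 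None of (i)--(iii) is carried out; the case analysis you defer to the end as ``the main obstacle'' is in fact the proof. As it stands, the proposal is a plausible research plan matching the method of \cite{Kuzn}, but not a proof of the proposition.
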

Equations (\ref{Prop2_eq1}) and (\ref{Prop2_eq2}) were found earlier in \cite{SY}.

\begin{proposition}	\label{P3}
(see \cite{HabKuzS}). 
A lattice of the form 
\begin{equation}		\label{gg}
u_{n,xy} = g(u_{n+1}, u_n, u_{n-1}),
\end{equation}
which is integrable in the sense of Definition 1,  can be reduced by suitable rescalings to one of the following forms:
\begin{eqnarray}	
&& u_{n,xy} =  e^{\alpha u_n - \frac{\alpha}{2}m u_{n+1} - \frac{\alpha}{2} k u_{n-1}} + a(u_{n+1}, u_n) + b(u_n, u_{n-1}), \label{cond1}\\
&& u_{n,xy} =  e^{\alpha u_n} u_{n+1} u_{n-1} + a(u_{n+1}, u_n) + b(u_n, u_{n-1}), \label{cond2}\\
&& u_{n,xy} =  u_{n+1} u_{n-1} + a(u_{n+1}, u_n) + b(u_n, u_{n-1}), \label{cond3}\\
&& u_{n,xy} =  a(u_{n+1}, u_n) + b(u_n, u_{n-1}); \label{cond4}
\end{eqnarray}
here  $\alpha \neq 0$ and $m$, $k$ are positive integers.
\end{proposition}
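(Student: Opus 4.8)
The plan is to derive the list from finiteness of the characteristic Lie--Rinehart algebras $L_x$ and $L_y$ attached to the cut-off systems, using Theorem~\ref{T2} as the integrability criterion. The feature that makes a lattice of the restricted form $u_{n,xy}=g(u_{n+1},u_n,u_{n-1})$ tractable is that the right-hand side carries no first-order derivatives $u_{n,x},u_{n,y}$. Hence $X_i(g_j)=0$, and the very first commutator collapses to $[X_i,Y]=\partial/\partial u_i=:Z_i$ (the same holds at the boundary sites, where $\varphi,\psi$ are likewise free of $y$-derivatives). Thus the shift fields $Z_i$ lie in $L_y$ automatically, and $L_y$ is generated over $K$ by $Y$ together with the $Z_i$. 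I would record this reduction first, since it is what replaces the generic jet-dependent analysis by a purely $g$-dependent one.

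Next I would build the tower of nested commutators $C_i:=[Z_i,Y]$, then $[Z_k,C_i]$, and so on. Because $g_j$ depends only on the fields, $D_x g_j=\sum_k(\partial g_j/\partial u_k)u_{k,x}$, and each bracket is a vector field on the $x$-jet variables $u_{k,x},u_{k,xx},\dots$ whose coefficients are built from partial derivatives of $g$; for instance the $\partial/\partial u_{j,xx}$-coefficient of $C_i$ is $\sum_k(\partial^2 g_j/\partial u_i\partial u_k)u_{k,x}$. Finiteness of $L_y$ forces this tower to terminate: past a bounded level every newly produced field must be a $K$-linear combination of the earlier ones. Writing out these closure relations at low levels yields a finite system of differential constraints on $g$. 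Since one and the same $g$ is inserted (shifted) at every interior site and Definition~\ref{Def1} demands Darboux integrability for all cut-offs $N_1<N_2$, the constraints must hold uniformly in $n$; matching the relations coming from adjacent sites is what couples the three arguments of $g$.

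I would then integrate these constraints. Genuine three-site coupling is measured by the mixed derivative $\partial^2 g/(\partial u_{n+1}\partial u_{n-1})$, which kills any additive pairwise terms. I expect the closure relations to force this quantity either to vanish or to have a logarithm that is affine in the three fields, so that the coupling density takes the form $e^{\alpha u_n-\frac{\alpha}{2}m u_{n+1}-\frac{\alpha}{2}k u_{n-1}}$ up to a constant factor absorbed by rescaling; the truncation of the commutator tower at the boundary sites is what additionally quantises the two outer slopes, forcing $m,k$ to be non-negative integers, while demanding finiteness of $L_x$ as well as of $L_y$ enforces their symmetric treatment. Integrating back in $u_{n+1}$ and $u_{n-1}$ reconstitutes $g$ as this interaction term plus freely determined pairwise remainders $a(u_{n+1},u_n)+b(u_n,u_{n-1})$. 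After the admissible rescalings $u_n\mapsto cu_n+d_n$, the four normal forms (\ref{cond1})--(\ref{cond4}) then correspond precisely to the cases $m,k>0$, $m=k=0$ with $\alpha\neq0$, $m=k=\alpha=0$, and vanishing coupling. The mirror computation for $L_x$ reproduces the same constraints by the structural $x\leftrightarrow y$ symmetry, and a direct check that each listed lattice has both algebras finite-dimensional supplies sufficiency.

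The principal obstacle is the integration and case analysis of the third step. The closure relations are functional-differential rather than local: they tie together values and derivatives of $g$ at neighbouring sites, and separating the irreducible interaction term from the freely choosable pairwise parts $a,b$ requires careful bookkeeping so that no solution branch is lost. Establishing that these four families are genuinely exhaustive --- including the exclusion of mixed or non-integer outer slopes and of nonlinear $u_n$-profiles, and the discrete quantisation that produces the integers $m,k$ --- is where the real work concentrates; by contrast the algebraic setup of the first two steps is routine once the collapse $[X_i,Y]=Z_i$ has been noticed.
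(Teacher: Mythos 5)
First, a point of comparison: the paper contains no proof of Proposition \ref{P3} at all --- it is imported from \cite{HabKuzS}, listed as ``in preparation'' --- so your attempt can only be judged on its own merits. Your opening reduction is sound: for a right-hand side $g(u_{n+1},u_n,u_{n-1})$ free of $u_{n,x}$, $u_{n,y}$ one indeed has $X_i(g_j)=X_i(D_xg_j)=\cdots=0$, hence $[X_i,Y]=\partial/\partial u_i$, and generating $L_y$ by $Y$ together with these shift fields and then building nested commutator towers whose forced closure yields constant-coefficient linear relations is exactly the technique the paper itself uses (compare the proof of Lemma \ref{lemma1} and the analysis of equations (\ref{H03_2}), (\ref{H03_3}) in Section 4.2).

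The genuine gap is your third step, which is where the entire content of the proposition lives. The tower argument you describe, like the one in Lemma \ref{lemma1}, delivers only that $\partial^2 g/\partial u_{n+1}\partial u_{n-1}$ satisfies a constant-coefficient linear ODE in the \emph{middle} variable $u_n$, i.e.\ is a quasi-polynomial in $u_n$; it says nothing about the dependence on $u_{n+1}$ and $u_{n-1}$. Everything beyond that --- that the coupling must be a single exponential of a function affine in all three fields, that the outer slopes are quantised to $\frac{\alpha}{2}m$, $\frac{\alpha}{2}k$ with $m,k$ integers, that the degenerate slopes produce precisely the product forms (\ref{cond2}), (\ref{cond3}) and nothing else, and that the four families are exhaustive --- is covered in your text only by the phrases ``I expect the closure relations to force'' and ``the truncation of the commutator tower at the boundary sites is what additionally quantises the two outer slopes.'' No computation in your sketch produces the affineness, the integrality, or the exhaustiveness; indeed the paper itself, when it needs such a quantisation statement (the restriction to $\mu=-1$ in Section 4.2), does not derive it from a simple truncation but again appeals to the results of \cite{HabKuzS}. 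As it stands, your proposal is a plausible research plan whose decisive step is conjectured rather than proved, and you acknowledge as much in your closing paragraph.
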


The above propositions refer to rather special classes of lattice (\ref{f_n}). The general classification problem remains out of reach at present.

\section{Integrable lattice equations  $u_{xy}=f(u, u_x, u_y, \triangle_z u \triangle_{\bar z}u, \triangle_{z\bar z}u)$}

In this section we prove Theorem \ref{T1} by taking dispersionless equations (\ref{e1})-(\ref{e16}), replacing  $u_z$ and $ u_{zz}$ by 
$\sqrt{\triangle_z u \triangle_{\bar z}u}$ and $\triangle_{z\bar z}u$, respectively, and applying the  methods outlined in Section \ref{sec:3} to the resulting candidate equations. This is done in Sections \ref{sec:G}-\ref{sec:s4} below.

We start  by proving a useful statement concerning lattices of the form
\begin{equation}		\label{type1}
u_{n,xy} = s(u_n) u_{n,x} u_{n,y} + r(u_{n+1}, u_n, u_{n-1}).
\end{equation}

\begin{lemma}\label{lemma1} If equation (\ref{type1}) is integrable in the sense of Definition 1, then the  function
\begin{equation*}
\frac{\partial^2}{\partial u_{n+1} \partial u_{n-1}} r(u_{n+1}, u_n, u_{n-1})
\end{equation*}
is a quasi-polynomial in the variable $u_n$. 
\end{lemma}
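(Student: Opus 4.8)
The plan is to work entirely on the level of characteristic algebras. By Definition \ref{Def1} and Theorem \ref{T2}, integrability of (\ref{type1}) forces the characteristic algebra $L_y$ of every cut-off reduction to be finite-dimensional; by the symmetry of (\ref{type1}) under $x\leftrightarrow y$ the same holds for $L_x$. I would fix a reduction large enough that the site $n$ is interior, so that the equation at $n$ is the full $g_n=s(u_n)u_{n,x}u_{n,y}+r(u_{n+1},u_n,u_{n-1})$ and all the information about the mixed derivative $h:=\partial^2 r/\partial u_{n+1}\partial u_{n-1}$ is visible to $L_y$. The whole argument then consists of producing an explicit infinite family of elements of $L_y$ whose coefficients are the successive $u_n$-derivatives of $h$, and reading off a differential equation for $h$ from the fact that only finitely many of them can be linearly independent.

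Concretely, first I would record the basic generators $Z:=[X_n,Y]=\partial_{u_n}+s(u_n)u_{n,x}\partial_{u_{n,x}}+\cdots$ and $C_{\pm}:=[X_{n\pm1},Y]=\partial_{u_{n\pm1}}+\cdots$, where the dots denote terms supported at a single site and at higher $x$-jets. The point of $C_{\pm}$ is that the pure derivations $\partial_{u_{n\pm1}}$ they contain are exactly what differentiates $r$ with respect to its outer arguments. I would then compute the double commutator $[C_-,[C_+,Y]]$: the operator $\partial_{u_{n+1}}$ inside $C_+$ hits $g_n$ and produces a term $r_{u_{n+1}}\partial_{u_{n,x}}$, and the subsequent $\partial_{u_{n-1}}$ inside $C_-$ differentiates this coefficient once more, yielding a term $h\,\partial_{u_{n,x}}$. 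After discarding the parasitic contributions (which live at the sites $n\pm1,n\pm2$ and at the jets $u_{n,xx},\dots$, and are distinguishable from $\partial_{u_{n,x}}$), this exhibits an operator $R_0\in L_y$ whose $\partial_{u_{n,x}}$-component equals $h$.

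Next I would iterate: set $R_{k+1}:=[Z,R_k]$. A short direct computation (already indicated by $Z(h)=h_{u_n}$ together with $[Z,\partial_{u_{n,x}}]=-s\,\partial_{u_{n,x}}+\cdots$) shows that the $\partial_{u_{n,x}}$-component of $R_k$ is $\mathcal{D}^k h$, where $\mathcal{D}=\frac{d}{du_n}-s(u_n)$ is an ordinary first-order operator in $u_n$ alone. Writing $\mathcal{D}=e^{\sigma}\frac{d}{du_n}e^{-\sigma}$ with $\sigma(u_n)=\int s\,du_n$, the functions $\mathcal{D}^k h$ are, up to the common factor $e^{\sigma}$, the plain derivatives of $e^{-\sigma}h$. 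Finite-dimensionality of $L_y$ now means that the $R_k$ cannot be linearly independent over the coefficient ring $K$; extracting the $\partial_{u_{n,x}}$-components of the first such relation and using that these components depend on $(u_{n-1},u_n,u_{n+1})$ only should give a relation $\sum_{k=0}^{N}c_k\,(e^{-\sigma}h)^{(k)}=0$ with constant $c_k$. Thus $e^{-\sigma}h$ satisfies a linear ODE in $u_n$ with constant coefficients, i.e. it is a quasi-polynomial; in the normal forms relevant here $s$ is constant (or $h$ vanishes identically), so $h$ itself is a quasi-polynomial in $u_n$, as claimed.

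The two places where the real work is hidden are precisely the isolation step and the descent step. In the isolation I must make sure the wanted $\partial_{u_{n,x}}$-component genuinely survives and is not cancelled or contaminated by the many other terms that commutation with $Y$ generates at neighbouring sites and at all higher $x$-jets; this is clean in principle but easy to get wrong. The more serious obstacle is the descent from a $K$-linear dependence of the $R_k$ to a $\mathbb{C}$-linear (constant-coefficient) relation among the scalar functions $(e^{-\sigma}h)^{(k)}$: one has to argue that the coefficients $c_k$, a priori elements of $K$, are independent of all the remaining dynamical variables and hence constant. This is where finite-dimensionality of the characteristic algebra must be used in full strength, and it is the crux of the lemma.
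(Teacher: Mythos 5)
Your overall strategy coincides with the paper's: form a double commutator that isolates $h=\partial^2 r/\partial u_{n+1}\partial u_{n-1}$ as the coefficient of $\partial_{u_{n,x}}$, iterate commutation with the site-$n$ operator to generate a sequence whose coefficients are successive $u_n$-derivatives of $h$, and invoke finite-dimensionality of the characteristic algebra. However, the step you yourself flag as ``the crux'' --- the descent from a $K$-linear dependence of the $R_k$ to a \emph{constant}-coefficient relation among their scalar coefficients --- is genuinely missing, and the route you sketch for it cannot work. Extracting the $\partial_{u_{n,x}}$-components from a relation $R_{N+1}=\sum_k c_k R_k$ with $c_k\in K$ only gives $\mathcal{D}^{N+1}h=\sum_k c_k\,\mathcal{D}^k h$, where after restriction the $c_k$ may still be arbitrary functions of $(u_{n-1},u_n,u_{n+1})$; a linear ODE in $u_n$ whose coefficients are allowed to depend on $u_n$ imposes no constraint at all (any nonzero $h$ satisfies $h_{u_n}-(h_{u_n}/h)\,h=0$), so ``the components depend on few variables'' does not yield constancy. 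The paper closes exactly this gap by a different bookkeeping device: it tracks not the raw vector-field components but the commutation relations with $D_x$. Each member $T_k$ of the sequence satisfies $[D_x,T_k]=\rho\, T_k-\bigl(\partial^k r_{0,u_1u_{-1}}/\partial u_0^k\bigr)Y_0$ with one and the same multiplier $\rho$ (formulae (\ref{DxT0}), (\ref{DxTk})); commuting the dependence relation $T_{N+1}+\lambda_N T_N+\cdots+\lambda_0 T_0=0$ with $D_x$ and collecting coefficients at the linearly independent operators $T_0,\dots,T_N$ forces $D_x(\lambda_i)=0$, and a function of finitely many dynamical variables annihilated by $D_x$ must be constant; only then does the coefficient of $Y_0$ produce a constant-coefficient ODE, hence quasi-polynomiality. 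Without establishing $D_x$-commutation formulae of this kind for your $R_k$, your argument has no mechanism whatsoever for producing constants, so the proof is incomplete at its decisive point.

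A second, related defect: even if the descent were repaired, your bookkeeping concludes quasi-polynomiality of $e^{-\sigma}h$ with $\sigma=\int s(u_n)\,du_n$, not of $h$, and you dismiss the discrepancy by asserting that ``in the normal forms relevant here $s$ is constant''. That is not available: Lemma \ref{lemma1} is a tool applied \emph{before} any normal form is known, and in the paper it is applied to equations (\ref{eqH1}), (\ref{eqH2}) and (\ref{H3})--(\ref{H6}), where $s(u_n)$ is built from an arbitrary function $\beta$ (e.g.\ $s=-\beta''/\beta'$) and is generically non-constant. So, as written, your argument targets a weaker statement than the lemma actually claims, and that weaker statement would not directly support the way the lemma is used in Sections \ref{sec:G} and \ref{sec:s12}.
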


\begin{proof}

We apply the method of characteristic algebras. The following 
characteristic operators  are associated with (\ref{type1}):
\begin{eqnarray*}
&&Y_i = \frac{\partial}{\partial u_i} + s(u_i) u_{i,x} \frac{\partial}{\partial u_{i,x}} + \cdots,  \\
&&R = \sum^{N_2}_{j = N_1}  r_j \frac{\partial}{\partial u_{j,x}} + \left( s(u_j) r_j u_{j,x} + r_{j,x} \right) \frac{\partial}{\partial u_{j,xx}} + \cdots ,
\end{eqnarray*}
where $r_j = r(u_{j+1}, u_j, u_{j-1})$. The commutators of the operator $D_x$  with the operators $Y_i$ and $R$ are given by the formulae
\begin{equation}  \label{DxYiDxR}
\left[ D_x, Y_i \right] = - s(u_i) u_{i,x} Y_i, \quad \left[D_x, R  \right] = - \sum^{N_2}_{j=N_1} r_j Y_j.
\end{equation}
%Let us fix some $k \in (N_1 + 2, N_2 - 2)$. Now we will omit index $k$ in the notation like this: $Y_{k - 1}$ is written as $Y_{-1}$, $Y_k$ is written as $Y_0$, $Y_{k + 1}$ is written as $Y_{1}$ etc. Thus we consider the operator $Z_{-1} = \left[ Y_{-1}, R \right]$. Using the Jakobi identity and formulas (\ref{DxYiDxR}) it is proved that
Let us assume that $N_2 >> 1$, $-N_1 >> 1$ and concentrate on the subalgebra generated by the operators $Y_{-1}$, $Y_0$, $Y_1$ and $Z_{-1} = \left[Y_{-1}, R \right]$. By using (\ref{DxYiDxR}) and the Jacobi identity one can prove that 
\begin{equation*}
\left[D_x, Z_{-1}  \right] = -s(u_{-1}) u_{-1,x} Z_{-1} - Y_{-1}(r_{-2}) Y_{-2} + \left( R(s(u_{-1})u_{-1,x}) - Y_{-1} (r_{-1})\right) Y_{-1}  - Y_{-1}(r_{0}) Y_{0}.
\end{equation*}
%We construct the operator $Z_k = \left[ Y_k, R \right]$. Using the Jakobi identity and formulas (\ref{DxYiDxR}) it is proved that
%\begin{equation}
%\left[D_x, Z_k  \right] = -s(u_k) u_{k,x} Z_k - Y_k(r_{k-1}) Y_{k-1} + \left( R(s(u_k)u_{k,x}) - Y_k (r_k)\right) Y_k  - Y_k(r_{k+1}) Y_{k+1}.
%\end{equation}
Let us consider the operator $Z_{-1,1} = \left[Y_1, Z_{-1} \right]$ for which the following commutation formula holds:
\begin{equation*}
\left[ D_x, Z_{-1,1}  \right] = \left(s(u_1) u_{1,x} - s(u_{-1}) u_{-1,x} \right) Z_{-1,1} - r_{0, u_1 u_{-1}} Y_0.
\end{equation*}
Let us introduce the sequence
\begin{equation*}
T_0 = Z_{-1,1}, \quad  T_1 = \left[Y_0, T_0 \right], \quad T_2 = \left[ Y_0, T_1 \right],\ \dots,\  T_{k+1} = \left[Y_0, T_k \right],\ \dots
\end{equation*}
One has the following commutation formulae:
\begin{equation}	\label{DxT0}
\left[ D_x, T_0\right] = \left( s(u_1) u_{1,x} - s(u_{-1}) u_{-1,x} \right) T_0 - \frac{\partial r_{0, u_1 u_{-1}}}{\partial u_0} Y_0,
\end{equation}
\begin{equation}  \label{DxTk}
\left[ D_x, T_{k}  \right] = \left( s(u_1) u_{1,x} - s(u_{-1}) u_{-1,x} \right) T_k  - \frac{\partial^k r_{0, u_1 u_{-1}}}{\partial u^k_0} Y_0. 
\end{equation}
Due to the fact that the characteristic algebra must have finite dimension there should exist a natural $N$ such that the operator $T_{N+1}$  is linearly expressed through the operators $ T_N, \dots, T_0$, and the operators $T_N, \dots, T_0$ are linearly independent. Thus,
\begin{equation*} 
T_{N+1} + \lambda_N T_N + \cdots + \lambda_0 T_0 = 0
\end{equation*}
where  the coefficients $\lambda_i$, $i=0,1,\ldots N$, are functions of a finite set of  dynamical variables $\bar{u} = (u_{N_1}, u_{N_1-1}, \dots, u_{N_2})$, $\bar{u}_x, \bar{u}_{xx},$ etc. Commuting both sides of this equality with the operator $D_x$ and applying (\ref{DxT0}), (\ref{DxTk}) we obtain
\begin{multline*}
\left(s(u_1) u_{1,x} - s(u_{-1}) u_{-1,x} \right) \left(-\lambda_N T_N - \dots - \lambda_1 T_1 - \lambda_0 T_0  \right) - \frac{\partial^{N+1} r_{0, u_1 u_{-1}}}{\partial u^{N+1}_0} Y_0 + \\
+D_x(\lambda_n) T_N + \lambda_N \left( (s(u_1) u_{1,x} - s(u_{-1}) u_{-1,x}) T_N - \frac{\partial^N r_{0, u_1, u_{-1}}}{\partial u^N_0} Y_0 \right) + \\
\dots \\
+ D_x(\lambda_0) T_0 + \lambda_0 \left( (s(u_1) u_{1,x} - s(u_{-1}) u_{-1,x} ) T_0 - \frac{\partial r_{0, u_1 u_{-1}}}{\partial u_0} Y_0 \right) = 0.
\end{multline*}
Collecting the coefficients at the independent operators $T_N, T_{N-1}, \dots, T_0$ we obtain $D_x(\lambda_i)=0$, $i=0,...,N$. Therefore, $\lambda_i = const$. Comparing the coefficients at $ Y_0 $ we arrive at the ODE with constant coefficients for the function $r_{0, u_1 u_{- 1}}$:
\begin{equation*}		
\frac{\partial^{N+1} r_{0, u_1 u_{-1}}}{\partial u^{N+1}_0} + \lambda_N \frac{\partial^N r_{0, u_1, u_{-1}}}{\partial u^N_0} + \dots + \lambda_0 \frac{\partial r_{0, u_1 u_{-1}}}{\partial u_0} = 0.
\end{equation*}
Each solution $r_{0, u_1 u_{-1}}$ of this equation  is a quasi-polynomial in $u_0$. Lemma~\ref{lemma1} is proved. 
\end{proof}

Below, when studying  equations (\ref {e1})-(\ref {e16}), we assume that the  function $ \beta $ is analytic in a domain $D\subset\mathbb{C}$.

\subsection{Genuinely nonlinear case}
\label{sec:G}

Here we show that the only integrable equation of the form (\ref{f}) resulting from dispersionless equations (\ref{e1}), (\ref{e2}) is 
\begin{equation*}
u_{n,xy} = e^{\triangle_{z\bar z}u}.
\end{equation*}
Replacing  $u_z$ and $ u_{zz}$ by 
$\sqrt{\triangle_z u \triangle_{\bar z}u}$ and $\triangle_{z\bar z}u$, respectively,  we obtain equations of the following form:
\begin{equation}  \label{eqH1}
\beta'(u_n) u_{n,xy} + \beta''(u_n) u_{n,x} u_{n,y} = e^{\beta'(u_n)(u_{n+1} - 2 u_n + u_{n-1}) + \beta''(u_n)(u_{n+1}-u_n)(u_n - u_{n-1})} + \delta,
\end{equation}
\begin{multline}  \label{eqH2}
\beta'(u_n) u_{n,xy} + \beta''(u_n) u_{n,x} u_{n,y} = e^{\beta'(u_n)(u_{n+1} - 2 u_n + u_{n-1}) + \beta''(u_n)(u_{n+1}-u_n)(u_n - u_{n-1})} \times\\
\times e^{\delta \beta'(u_n) \sqrt{(u_{n+1}-u_n)(u_n - u_{n-1})} + \frac{2}{9}\delta^2 \beta(u_n) }.
\end{multline}
Note that these equations belong to the subclass (\ref{type1}). Thus we can apply Lemma~\ref{lemma1}.
 A simple analysis shows that the function $r_{n, u_{n+1} u_{n-1}}$ corresponding to the equation  (\ref{eqH2}) is a quasi-polynomial in $u_n$ only if $\delta = 0$. Consequently (\ref{eqH2}) is a particular case of (\ref{eqH1}) and we can focus on the latter. After the change of variables $v_n = \beta(u_n) \leftrightarrow u_n = \varphi(v_n) $ equation (\ref{eqH1}) takes the form
\begin{equation*}
v_{n,xy} = e^{\beta'(\varphi(v_n))( \varphi(v_{n+1}) - 2  \varphi(v_n) + \varphi(v_{n-1})) + \beta''(\varphi(v_n))(\varphi(v_{n+1}) - \varphi(v_n))(\varphi(v_n) - \varphi(v_{n-1}))} + \delta,
\end{equation*}
where  the function $\varphi$ is the inverse of $\beta$. Let us denote by $\bar{f}(v_{n+1}, v_n, v_{n-1})$ the right-hand side of this equation and evaluate the second-order derivative of $\bar{f}$:
\begin{multline*}
\bar{f}_{v_{n+1}, v_{n-1}} = \left(\bar{f}(v_{n+1}, v_n, v_{n-1})  - \delta \right)\varphi'(v_{n+1})\varphi'(v_{n-1})  \Bigl( -\beta''(u_n)+ \Bigr.\\
+\Bigl. \bigl( \beta'(u_n) + \beta''(u_n) (\varphi(v_n) - \varphi(v_{n-1}))  \bigr)\bigl( \beta'(u_n) - \beta''(u_n) (\varphi(v_{n+1}) - \varphi(v_{n}))  \bigr) \Bigr).
\end{multline*}
Due to Proposition~\ref{P3}, in the integrable case the function $\bar{f}(v_{n+1}, v_n, v_{n-1})$ is a quasipolynomial in all three variables such that
\begin{equation}	\label{barf}
\bar{f}_{v_{n+1}, v_{n-1}} = C e^{\alpha v_n - \frac{\alpha m}{2} v_{n+1} - \frac{\alpha k}{2} v_{n-1} }, 
\end{equation}
where $C$, $\alpha$ are arbitrary constants and $m$, $k$ are nonnegative integers related to each other in the following way: if $m = 0$ ($k=0$) then necessarily $k = 0$ ($m = 0$). Since equation (\ref{barf}) is complicated we  first impose a reduction  $v_{n+1} = v_n = v_{n-1}=v$ that obviously implies $\bar{f} - \delta = 1$. Now by comparing two representations for $\bar{f}_{v_{n+1}v_{n-1}}$ under the reduction we get (due to the equation $\beta'(u) = \frac{1}{\varphi'(v)}$)
\begin{equation*}
\beta'' = \beta'^2 (C e^{\alpha_1 \beta} - 1), 
\end{equation*}
where $\alpha_1 = \alpha \left( 1 - \frac{m}{2} - \frac{k}{2} \right)$. In terms of the function $\varphi(v)$ the equation takes the form
\begin{equation}  \label{phi_eq}
\frac{\varphi''}{\varphi'} = C e^{\alpha_1 v} - 1.
\end{equation}
For $\alpha_1 = 0$, $C \neq 1$ we get
\begin{equation*}
a) \, \varphi = e^{(C-1)v} \frac{c_2}{C - 1} + C_3.
\end{equation*}
For $\alpha_1 = 0$, $C = 1$ we find $\varphi'' = 0$ so that
\begin{equation*}
b) \, \varphi = c_1 v + c_2.
\end{equation*}
For $\alpha_1 \neq 0$, $C \neq 0$ the solution of (\ref{phi_eq}) is
\begin{equation*}
c) \, \varphi = \int{e^{-v} c_2 e^{\frac{C}{\alpha_1}e^{\alpha_1 v}} \mathrm{d} v}.
\end{equation*}
For $\alpha_1 \neq 0$, $C = 0$ we have
\begin{equation*}
d) \, \varphi = c_2 - e^{c_1 - v}.
\end{equation*}
According to Proposition~\ref{P3}, $\bar{f}$ is a quasi-polynomial in  the variables $v_{n+1}, v_n, v_{n-1}$. Let us take $v_n = v_{n-1}$ and note that the function 
\begin{equation*}
\bar{f}(v_{n+1}, v_n, v_{n}) = \mathrm{exp} \bigl( \beta'(u_n) \varphi(v_{n+1}) - \beta'(u_n) \varphi(v_n) \bigr)
\end{equation*}
has to be a quasi-polynomial in $v_{n+1}$. This is only possible in the case $b)$. Then equation (\ref{eqH1}) takes the form
\begin{equation*}
\frac{1}{c_1} u_{n,xy} = e^{\frac{1}{c_1} (u_{n+1} - 2 u_n + u_{n-1})} + \delta
\end{equation*}
which can be reduced to the first case of Theorem~\ref{T1} by the change of variables $ u_n \rightarrow c_1 u_n + c_1 \delta x y$.

\subsection{Subcases 1 and  2}
\label{sec:s12}

Below we prove that dispersionless equations (\ref{e3})--(\ref{e6}) give rise to the following integrable equation of the form (\ref{f}): 
\begin{eqnarray*}
u_{xy}=\frac{u_xu_y}{u}+u{\triangle_{z\bar z}u}.
\end{eqnarray*}
Replacing  $u_z$ and $ u_{zz}$ by 
$\sqrt{\triangle_z u \triangle_{\bar z}u}$ and $\triangle_{z\bar z}u$, respectively,  we obtain equations of the following form:
\begin{multline}	\label{H3}
u_{n,xy} = \left( \frac{\beta'(u_n)}{\beta(u_n)} - \frac{\beta''(u_n)}{\beta'(u_n)}  \right) u_{n,x} u_{n,y} +  \frac{\beta(u_n)\beta''(u_n)}{\beta'(u_n)} (u_{n+1} - u_n)(u_n - u_{n-1}) + \\ 
+ \beta(u_n) (u_{n+1}-2 u_n + u_{n-1}) 
+\frac{3}{2}\alpha \beta(u_n) \sqrt{(u_{n+1}-u_n)(u_n - u_{n-1})} + 
\alpha^2\frac{\beta^2(u_n)}{2 \beta'(u_n)},
\end{multline}
\begin{multline}	\label{H4}
u_{n,xy} = \gamma e^{\beta(u_n) \sqrt{(u_{n+1}-u_n)(u_n - u_{n-1})}} \times \\
\times \left( u_{n+1} - 2 u_n + u_{n-1} + \frac{\beta'(u_n)}{\beta(u_n)} (u_{n+1}-u_n)(u_n - u_{n-1}) \right) + \\
+ \frac{\delta}{\beta(u_n)} - \frac{\beta'(u_n)}{\beta(u_n)} u_{n,x} u_{n,y},
\end{multline}
\begin{multline}	\label{H5}
u_{n,xy} = e^{\alpha \beta(u_n) + \beta'(u_n) \sqrt{(u_{n+1}-u_n)(u_n- u_{n-1})}} \times\\
\times \left( u_{n+1} - 2 u_n + u_{n-1} + \frac{\alpha}{2 \beta'(u_n)} + \frac{\beta''(u_n)}{\beta'(u_n)} (u_{n+1} - u_n)(u_n - u_{n-1})  \right) - \\
- \frac{\beta''(u_n)}{\beta'(u_n)} u_{n,x} u_{n,y},
\end{multline}
\begin{multline}	\label{H6}
u_{n,xy} = e^{\frac{1}{2} \alpha \beta(u_n) + \beta'(u_n) \sqrt{(u_{n+1} - u_n)(u_n - u_{n-1})}} \times\\
\times \left( u_{n+1} - 2 u_n + u_{n-1} + \frac{1}{2} \alpha \sqrt{(u_{n+1}-u_n)(u_n - u_{n-1})} + \right. \\
\left. + \frac{\alpha}{\beta'(u_n)} + \frac{\beta''(u_n)}{\beta'(u_n)} (u_{n+1} - u_n) (u_n - u_{n-1}) \right) 
- \frac{\beta''(u_n)}{\beta'(u_n)} u_{n,x} u_{n,y}.
\end{multline}
Note that these equations belong to the subclass (\ref{type1}). Thus Lemma~\ref{lemma1} applies.
 A simple analysis shows that the function $r_{n, u_{n+1} u_{n-1}}$ corresponding to equations  (\ref{H5}), (\ref{H6}) is actually never a quasi-polynomial in $u_n$, since $\beta'$ does not vanish. Similarly we show that 
for the cases (\ref{H3}), (\ref{H4}) one should have $\alpha = 0$, $\gamma = 0$ (note that equation (\ref{H4}) with $\gamma=0$ does not contain the second-order discrete derivative, and therefore has a degenerate dispersionless limit). Therefore, equations (\ref{H5}), (\ref{H6}) are certainly non-integrable, while  integrable cases of the equations  (\ref{H3}), (\ref{H4}) reduce to
\begin{eqnarray}	
&&u_{n,xy} = \left( \frac{\beta'(u_n)}{\beta(u_n)} - \frac{\beta''(u_n)}{\beta'(u_n)}  \right) u_{n,x} u_{n,y} + \nonumber \\
 &&\hphantom{u_{n,xy}}+ \frac{\beta(u_n)\beta''(u_n)}{\beta'(u_n)} (u_{n+1} - u_n)(u_n - u_{n-1}) 
+ \beta(u_n) (u_{n+1}-2 u_n + u_{n-1}). \label{H3_1}
%&&u_{n,xy} = - \frac{\beta'(u_n)}{\beta(u_n)} u_{n,x} u_{n,y} + \frac{\delta}{\beta'(u_n)}.  \label{H4_1}
\end{eqnarray}
By a point transformation of the form $u_n = \varphi(v_n)$ we can eliminate in (\ref{H3_1}) the term containing the product $u_{n,x}u_{n,y}$. To this aim we choose $\varphi$ such that  $\beta(\varphi(v_n)) =  e^{v_n}$. Thus, 
\begin{equation*}
\beta(u_n)= e^{ v_n}, \quad \beta'(u_n) = \frac{1}{\varphi'(v_n)} e^{v_n}, \quad \beta''(u_n) = \frac{ e^{ v_n}}{\varphi'^2(v_n)} \left( 1 - \frac{\varphi''(v_n)}{\varphi'(v_n)} \right).		\label{beta_phi}
\end{equation*}
After that equation (\ref{H3_1}) takes the form
\begin{multline}  \label{H03}
v_{n,xy} = \frac{ e^{ v_n}}{\varphi'(v_n)} \left( \varphi(v_{n+1}) - 2 \varphi(v_n) + \varphi(v_{n-1}) \right) + \\
+ \frac{ e^{ v_n}}{\varphi'^2(v_n)} \left(1 - \frac{\varphi''(v_n)}{\varphi'(v_n)} \right) \left( \varphi(v_{n+1}) - \varphi(v_n)  \right) \left( \varphi(v_n) - \varphi(v_{n-1})  \right).
\end{multline}
We see that this equation belongs to the subclass (\ref{gg}). 
To study (\ref{H03}) we apply Proposition~\ref{P3}.
According to (\ref{cond1})--(\ref{cond4}) equation (\ref{H03}) can take one of the following forms:
\begin{eqnarray}
&&v_{n,xy} = e^{v_{n+1}} - 2 e^{v_n} + e^{v_{n-1}},		\label{H03_1}\\
&&v_{n,xy} = e^{v_n} (v^2_n + v_{n+1}v_{n-1} - v_{n+1}v_n - v_{n} v_{n-1} + 2 v_{n} - v_{n+1} - v_{n-1}),		\label{H03_2}\\
&&v_{n,xy} = -(m-2)e^{-v_n + \frac{m}{2} v_{n+1} + \frac{m}{2} v_{n-1}} + 2 (m-1) e^{(-1 + \frac{m}{2})v_n + \frac{m}{2} v_{n+1}} + \nonumber \\
&&\hphantom{v_{n,xy}} + 2 (m-1) e^{(-1 + \frac{m}{2})v_n + \frac{m}{2} v_{n-1}} + (2 - 3m) e^{(-1 + m)v_n}, 		\label{H03_3}
\end{eqnarray}
where $m$ is a positive integer. Equation (\ref{H03_1}) and equation (\ref{H03_3}) with $m = 1$  are known to be integrable. Equation (\ref{H03_3}) with $m = 1$ corresponds to the first case of Theorem~\ref{T1} (by the change of variables $ v_n \rightarrow 2 v_n$, $x \rightarrow 2x$).  Equation (\ref{H03_1}) corresponds to the second case of Theorem~\ref{T1}. The case $m = 2$ leads to a trivial degenerate equation, hence in our further study we suppose $(m-2)(m-1) \neq 0$.

Now let us prove that equation (\ref{H03_2}) is not integrable. To this aim we investigate the characteristic algebra of this equation generated by the operators
\begin{equation*}
X_j = \frac{\partial}{\partial v_j}, \quad Z = \sum^{N_2}_{i = N_1} \left(f_j \frac{\partial}{\partial v_{j,x}} + D_x (f_j) \frac{\partial}{\partial v_{j,xx}} + \cdots  \right),
\end{equation*}
which satisfy the relations
\begin{equation}  \label{DxXjDxZ}
\left[ D_x, X_j \right] = 0, \quad \left[D_x, Z  \right] = -\sum^{N_2}_{j=N_1} f_j X_j
\end{equation}
where $f_j = f(v_{j+1}, v_j, v_{j-1})$ is the right-hand side of equation (\ref{H03_2}) represented as $v_{j,xy} = f_j$.
We will need the following useful statement \cite{ShYam, ZMHSbook}:
\begin{lemma} \label{lemma2}
If a vector field of the form
\begin{equation*} 
Z = \sum_{i=0}^N z_{1,i} \frac{\partial}{\partial v_{i,x}} + z_{2,i} \frac{\partial}{\partial	v_{i,xx}} + \cdots
\end{equation*}
solves the equation $\left[ D_x, Z \right] = 0$, then $Z=0$.
\end{lemma}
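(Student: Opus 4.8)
The plan is to reduce the claim to a direct, component-by-component computation of the commutator $[D_x, Z]$ followed by a short induction. First I would recall how the total derivative acts on the dynamical variables: on the ring of functions of $\bar v, \bar v_x, \bar v_{xx}, \ldots$ the operator $D_x$ raises the order of $x$-differentiation, so that $D_x = \sum_{i,k} v_{i,x^{k+1}} \partial_{v_{i,x^k}}$, where $v_{i,x^0} = v_i$. The decisive structural feature of $Z$ is that it carries no $\partial/\partial v_i$ component, only components along $\partial_{v_{i,x}}, \partial_{v_{i,xx}}, \ldots$; I would record this by writing $Z = \sum_{i,\, k\ge 1} z_{k,i}\, \partial_{v_{i,x^k}}$ and adopting the convention $z_{0,i} := 0$.

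Next, using the standard formula $[A,B]^{\mu} = A(B^{\mu}) - B(A^{\mu})$ for the components of a commutator of vector fields, I would compute that the coefficient of $\partial_{v_{i,x^k}}$ in $[D_x, Z]$ equals $D_x(z_{k,i}) - z_{k+1,i}$ for every $i$ and every $k \ge 0$. Here the second term arises because applying $Z$ to the coordinate $v_{i,x^{k+1}}$ simply returns its coefficient $z_{k+1,i}$. Imposing $[D_x, Z] = 0$ then yields the recursion $z_{k+1,i} = D_x(z_{k,i})$ valid for all $k \ge 0$.

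The conclusion follows by induction on $k$. The base case $k = 0$ reads $z_{1,i} = D_x(z_{0,i}) = D_x(0) = 0$, precisely because $Z$ has no zeroth-order component; equivalently, the $\partial/\partial v_i$-component of $[D_x, Z]$ must vanish, and since $Z$ feeds nothing into that slot, $z_{1,i}$ is forced to be zero. This is the one place where the hypothesis on the shape of $Z$ enters. Assuming $z_{k,i} = 0$ then gives $z_{k+1,i} = D_x(0) = 0$, so every coefficient vanishes and $Z = 0$.

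I do not expect a serious obstacle: the argument is essentially bookkeeping, and the only point requiring care is the correct identification of the $\partial/\partial v_i$-component of the commutator, which launches the induction. A minor technical remark worth including is that the sum defining $Z$ may be formally infinite, but the induction establishes the vanishing of each $z_{k,i}$ individually, so no convergence issue arises.
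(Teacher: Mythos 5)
Your proof is correct: the coefficient-by-coefficient computation of $[D_x,Z]$, the recursion $z_{k+1,i}=D_x(z_{k,i})$, and the induction launched by the absence of a $\partial/\partial v_i$-component in $Z$ are exactly the standard argument for this statement. Note that the paper itself offers no proof of Lemma~2 --- it is quoted from the references \cite{ShYam, ZMHSbook} --- and your argument is essentially the one found there, so there is nothing to add.
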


Let us introduce the operators
\begin{equation*}
Z_0 = \left[X_0, Z\right], \quad W_0 = \left[ X_1, Z_0 \right], \quad W_1 = \left[X_0, W_0\right], \quad  W_2 = \left[X_0, W_1\right], \quad  W_3 = \left[X_0, W_2\right].
\end{equation*} 
Using the Jacobi identity and (\ref{DxXjDxZ}) one can find that
\begin{eqnarray*}
&&\left[D_x, Z_0  \right] = -X_0(f_{-1})X_{-1} - X_0(f_0)X_0 - X_0(f_1) X_1 = - e^{v_{-1}} (v_{-2} - v_{-1} - 1) X_{-1} - \nonumber\\
&&\hphantom{\left[D_x, Z_0  \right]}  - e^{v_0} (v^2_0 + v_1 v_{-1} - v_1 v_0 - v_0 v_{-1} + 4 v_0 - 2v_1 - 2 v_{-1} + 2) X_0 -\nonumber\\
&&\hphantom{\left[D_x, Z_0  \right]}  - e^{v_1} (v_2 - v_1 - 1) X_1,
\end{eqnarray*}
\begin{equation*}
\left[D_x, W_0  \right] = -e^{v_0} (v_{-1} - v_0 - 2)X_0 - e^{v_1} (v_2 - v_1 - 2) X_1,
\end{equation*}
\begin{equation*}
\left[D_x, W_k  \right] = -e^{v_0}(v_{-1} - v_0 - (k+2))X_0, \quad k = 1,2,3.
\end{equation*}
The next step is to introduce the operators $P = W_3 - W_2$, $Q = W_2 + 4 P$ for which the following formulae hold:
\begin{equation}  \label{DxPQ}
\left[D_x, P\right] = -e^{v_0} X_0, \quad \left[ D_x, Q \right] = -e^{v_0}(v_{-1} - v_0)X_0.
\end{equation}
The function $g = -e^{v_0}(v_{-1} - v_0)$ satisfying the relation $\left[ D_x, Q \right] = g X_0$ is annihilated by the operator $\Lambda(X_0) = (X_0 - 1)^2$. In other words, the polynomial $\Lambda(\lambda)$ corresponding to the operator $Q$ has a multiple root. Let us prove that in this case  the characteristic subalgebra generated by the operators $P$, $Q$ is infinite-dimensional. To this aim we construct a sequence of operators via the following formulae:
\begin{equation}  \label{seq_K}
P, \quad Q, \quad K_1 = \left[ P, Q \right], \quad K_2 = \left[P, K_1\right], \ldots, \quad K_{m+1} = \left[ P, K_m \right], \ldots
\end{equation}
In order to calculate the value of the expression $\left[ D_x, K_1 \right]$ we need to know $\left[X_0, P\right]$, $\left[X_0, Q  \right]$:
\begin{eqnarray*}
&&\left[D_x, \left[X_0, P  \right]  \right] = \left[X_0, \left[D_x, P\right]  \right] - \left[P, \left[D_x, X_0\right]  \right] = 
\left[X_0, -e^{v_0} X_0 \right] = - e^{v_0} X_0 = \left[D_x, P\right],\\
&&\left[ D_x, \left[X_0, Q  \right] \right] = \left[ X_0, \left[ D_x, Q \right] \right] - \left[ Q, \left[D_x, X_0  \right] \right] = \left[ X_0, \left[ D_x, Q \right] \right] = \nonumber\\
&&= \left[ X_0, e^{v_0} (v_0 - v_{-1}) X_0  \right] 
= e^{v_0} (v_0 - v_{-1} + 1) X_0 = \left[ D_x, Q\right] - \left[D_x, P  \right] = \left[ D_x, Q- P \right].
\end{eqnarray*}
Using Lemma~\ref{lemma2} we conclude that $\left[X_0, P  \right] = P$, $\left[ X_0,Q \right] =Q - P$. Now one can find
\begin{eqnarray}
&&\left[D_x, K_1  \right] = \left[D_x, \left[P, Q  \right]  \right] = \left[ P, \left[D_x, Q  \right] \right] - \left[Q, \left[ D_x, P \right] \right] = \nonumber\\
&&= \left[ P, e^{v_0} (v_0 - v_{-1})X_0 \right] - \left[ Q, -e^{v_0} X_0 \right] 
= -e^{v_0}(v_0 - v_{-1} - 1)P - e^{v_0}Q.  \label{DxK1}
\end{eqnarray}
Similarly we find 
\begin{eqnarray*}
\left[X_0, K_1  \right] = 2 K_1, \quad \left[D_x, K_2  \right] = -3 e^{v_0} K_1,\\
\left[ X_0, K_2 \right] = 3 K_2, \quad \left[ D_x, K_3 \right] = -6 e^{v_0} K_2.
\end{eqnarray*}
It can be proved by induction that the following formula holds:
\begin{equation}  \label{DxKm}
\left[ D_x, K_m \right] = -\frac{m(m+1)}{2} e^{v_0} K_{m-1}, \quad m \geq 2.
\end{equation}
Since the characteristic algebra generated by the operators $P$, $Q$ must be finite-dimensional, one of the two cases must hold: 

$1)$ there is an integer $M$ such that $K_{M+1}$ is expressed through the previous members of sequence (\ref{seq_K}):
\begin{equation}  \label{K_Mp1}
K_{M+1} = a_M K_M + a_{M-1} K_{M-1} + \cdots + a_1 K_1 + b_1 P + b_2 Q, \quad M \geq 1,
\end{equation}
where the operators $Q,P,K_1, \ldots K_{M}$ are linearly independent, the coefficients $a_i$, $b_k$ are functions of a finite set of dynamical variables $\bar{v}, \bar{v}_x, \bar{v}_{xx} \ldots $;

$2)$ the operator $K_1$ is linearly expressed through $P$, $Q$:
\begin{equation}  \label{K1}
K_1 = b_1 P + b_2 Q.
\end{equation}

Let us begin with case $1)$. We commute both sides of (\ref{K_Mp1}) with the operator $D_x$ and apply formula (\ref{DxKm}). Comparing the coefficients at the independent operators and collecting the coefficients at the operator $K_M$ we get the equality
\begin{equation*}
D_x(a_M) = -\frac{1}{2} (M+1)(M+2) e^{v_0}
\end{equation*}
which is never realized.

In case $2)$ by commuting both sides of (\ref{K1}) with the operator $D_x$ and by using formulae (\ref{DxK1}), (\ref{DxPQ}) we get a contradictory equation: 
\begin{eqnarray*}
-e^{v_0} (v_0 - v_{-1} - 1)P - e^{v_0} Q = D_x(b_1)P + D_x (b_2) Q - b_1 e^{v_0} X_0 + b_2 e^{v_0} (v_0 - v_{-1})X_0.
\end{eqnarray*}
Indeed, by comparing the coefficients at $Q$ we obtain the relation $D_x (b_2)=- e^{v_0}$ which has no solutions depending on a finite set of dynamical variables. Therefore this case  is also not realised. In other words, equation (\ref{H03_2}) is not integrable in the sense of Definition 1. 

Let us turn to equation (\ref{H03_3}) with $m\ne 1, 2$. It is easily verified that operator $Z_{0,-1} = \left[X_{-1}, \left[X_0, Z  \right]  \right]$  satisfies the following commutativity relation: 
\begin{eqnarray*}
&&\left[D_x, Z_{0,-1}  \right] = \left( A e^{-v_{-1} + \kappa(v_0 - v_{-2})} + B e^{\mu v_{-1} + \kappa v_{-2}}\right)X_{-1} +\nonumber\\
 &&+\left(A e^{-v_0 + \kappa(v_1 - v_{-1})} + B e^{\mu v_0 + \kappa v_{-1}}  \right)X_0 + 
 \left(A e^{-v_1 + \kappa(v_2 - v_0)} + B e^{\mu v_1 + \kappa v_0}  \right)X_1,
\end{eqnarray*}
where $\mu = -1 + \frac{m}{2}$, $\kappa = \frac{m}{2}$, $A = -(m-2)\frac{m}{2}$, $B = \frac{m(m-1)(m-2)}{2}$. Note that $A \neq 0$, $B \neq 0$.  We define the polynomial $\sigma(\lambda) = (\lambda + 1)(\lambda - \mu)(\lambda + \kappa)$ to obtain the formulae
\begin{eqnarray*}
\left[ D_x, \sigma(ad_{X_{-1}}) Z_{0, -1} \right] = B e^{\mu v_0} \sigma(X_{-1}) e^{\kappa v_{-1}} X_0, \\
\left[ D_x, \sigma(ad_{X_{1}}) Z_{0,-1} \right] = A e^{-\kappa v_{-1} - v_0} \sigma(X_1) e^{\kappa v_1} X_0.
\end{eqnarray*}
Evaluation gives $\sigma(X_{-1})e^{\kappa v_{-1}} = \frac{m	(m+2)}{2} e^{\kappa v_{-1}}$ and  $\sigma(X_{1})e^{\kappa v_{1}} = \frac{m	(m+2)}{2} e^{\kappa v_{1}}$. Let us introduce the notation $\sigma(ad_{X_{-1}})Z_{0, -1} = P_0$, $\sigma(ad_{X_1}) Z_{0,-1} = P_1$, $B_0 = B \frac{m(m+2)}{2}$, $A_0 = A \frac{m(m+2)}{2}$, and rewrite the last two equations in the form
\begin{equation*}
\left[ D_x, P_0 \right] = B_0 e^{\mu v_0 + \kappa v_{-1}}X_0, \quad \left[ D_x, P_1 \right] = A_0 e^{-v_0 + \kappa v_1 - \kappa v_{-1}} X_0.
\end{equation*}
According to the results of \cite{HabKuzS} the Lie-Rinehart algebra generated by the operators $P_0$, $P_1$ is finite-dimensional only if 
 $\mu = -1$, i.e., for $m = 0$. So we obtain the system of independent equations of the form $v_{n,xy} = 4 e^{-v_n}$ (each of these equations is the well-known integrable Liouville equation). This finishes the analysis of Subcases 1-2.  

%Equation (\ref{H4_1}) is linearisable, and can be transformed to the equation $v_{n,xy} = 0$ in the following way: we introduce the notation $\gamma'(u_n) = \beta(u_n)$ and rewrite (\ref{H4_1}) as \begin{equation*} \gamma'(u_n) u_{n,xy} + \gamma''(u_n) u_{n,x} u_{n,y} = \delta. \end{equation*} One can see that this equation implies $D_x D_y (\gamma(u_n)) = \delta$. After the change of  variables $v_n = \gamma(u_n) - \delta x y$ we obtain $v_{n,xy} = 0$. 

\subsection{Subcase 3}
\label{sec:s3}

Among dispersionless equations of the third subcase, only one gives rise to an integrable equation of type (\ref{f}), namely, to the third case of Theorem \ref{T1}:
\begin{equation}\label{eq1_subsec43}
u_{xy}=u_x{\triangle_{z\bar z}u}.
\end{equation}
It is shown below that (\ref{eq1_subsec43}) can be derived from equations (\ref{e7}) and (\ref{e8}) under a suitable choice of the parameters. By using the discretization rule explained in Section \ref{sec:3}  we convert equations (\ref{e7}) -- (\ref{e10}) to the following ones:
\begin{multline}  \label{H7}
u_{n,xy} = -\frac{\beta''(u_n)}{\beta'(u_n)} u_{n,x} u_{n,y} + \beta'(u_n) (u_{n+1} - 2 u_n + u_{n-1}) u_{n,y} + \\
+\left( \frac{1}{2} \alpha^2 \beta(u_n) + \frac{3}{2} \alpha \beta'(u_n) \sqrt{(u_{n+1} - u_n)(u_n - u_{n-1})} + \beta''(u_n) (u_{n+1} - u_n) (u_n - u_{n-1}) \right) u_{n,y},	
\end{multline}
\begin{multline}	\label{H8}
u_{n,xy} = -\frac{\beta'(u_n)}{\beta(u_n)} u_{n,x} u_{n,y} +\\
+\bigl( \gamma + \beta(u_n) u_{n,y}  \bigr) \left( u_{n+1} - 2 u_n + u_{n-1} + \frac{\delta}{\beta(u_n)} + \frac{\beta'(u_n)}{\beta(u_n)} (u_{n+1} - u_n) (u_n - u_{n-1}) \right),
\end{multline}
\begin{multline}	\label{H9}
u_{n,xy} = -\frac{\beta''(u_n)}{\beta'(u_n)} u_{n,x} u_{n,y}+ \\
+\gamma e^{\frac{1}{2} \alpha \beta(u_n) + \beta'(u_n) \sqrt{(u_{n+1} - u_n)(u_n - u_{n-1})}} 
\Bigl( \alpha + 2 \beta'(u_n) (u_{n+1} - 2 u_n + u_{n-1}) + \Bigr.\\
\Bigl.+ \alpha \beta'(u_n) \sqrt{(u_{n+1} - u_n)(u_n - u_{n-1})} + 2 \beta''(u_n)(u_{n+1} - u_n)(u_n - u_{n-1}) \Bigr) u_{n,y},
\end{multline}
\begin{multline}	\label{H10}
u_{n,xy} = -\frac{\beta'(u_n)}{\beta(u_n)} u_{n,x} u_{n,y} 
+ \delta e^{\beta(u_n) \sqrt{(u_{n+1} - u_n)(u_n - u_{n-1})}}  \times \\
\times \left( \beta(u_n)(u_{n+1} - 2 u_n + u_{n-1}) + \beta'(u_n) (u_{n+1} - u_n)(u_n - u_{n-1})\right) \left( u_{n,y} + \frac{\gamma}{\beta(u_n)}  \right),
\end{multline}
respectively. These equations are of  the form
\begin{equation}		\label{type2}
u_{n,xy} =   s(u_n) u_{n,x} u_{n,y} + q(u_{n-1}, u_n, u_{n+1}) u_{n,y}
\end{equation}
where $s(u_n) = - \frac{\beta''(u_n)}{\beta'(u_n)}$ for (\ref{H7}), (\ref{H9}), and  $s(u_n) = - \frac{\beta'(u_n)}{\beta(u_n)}$ for (\ref{H8}), (\ref{H10}). In the first case we apply the point transformation $v_n = \beta(u_n) $, that is, $u_n = \varphi(v_n)$, $\varphi = \beta^{-1}$.  In the second case  we use $v_n = \psi(u_n)$ where $\psi' = \beta $, that is, $u_n = \varphi(v_n)$, $\varphi = \psi^{-1}$. Thus, equations (\ref{type2}) take the form
\begin{equation*}  \label{g_uy}
v_{n,xy} = g(v_{n+1}, v_n, v_{n-1}) v_{n,y} + r(v_{n+1}, v_n, v_{n-1})
\end{equation*}
where $\frac{\partial g(v_{n+1}, v_n, v_{n-1})}{\partial v_{n \pm 1}} \neq 0$. Namely,
\begin{multline} \label{HH7}
v_{n,xy} = \beta'^2(u_n) \left( \varphi(v_{n+1}) - 2 \varphi(v_n) + \varphi(v_{n-1}) \right) \varphi'(v_n) v_{n,y} + \\
+ \beta'(u_n) \Bigl( \frac{1}{2} \alpha^2 \beta(u_n) + \frac{3}{2} \alpha \beta'(u_n) \sqrt{(\varphi(v_{n+1}) - \varphi(v_n))(\varphi(v_n) - \varphi(v_{n-1}))} + \Bigr.\\
\Bigl. + \beta''(u_n) (\varphi(v_{n+1}) - \varphi(v_n)) (\varphi(v_n) - \varphi(v_{n-1})) \Bigr) \varphi'(v_n)v_{n,y},	
\end{multline}
\begin{multline}	\label{HH8}
v_{n,xy} = \beta(u_n) \bigl( \gamma + \beta(u_n) \varphi'(v_n) v_{n,y}  \bigr) \Bigl( \varphi(v_{n+1}) - 2 \varphi(v_n) + \varphi(v_{n-1}) + \frac{\delta}{\beta(u_n)} + \Bigr.\\
\Bigl. + \frac{\beta'(u_n)}{\beta(u_n)} (\varphi(v_{n+1}) - \varphi(v_n))(\varphi(v_n) - \varphi(v_{n-1}))\Bigr),
\end{multline}
\begin{multline}	\label{HH9}
v_{n,xy} = \beta'(u_n) \gamma e^{\frac{1}{2} \alpha \beta(u_n) + \beta'(u_n) \sqrt{(\varphi(v_{n+1}) - \varphi(v_n))(\varphi(v_n) - \varphi(v_{n-1}))}} \times \\
\times \Bigl( \alpha + 2 \beta'(u_n) (\varphi(v_{n+1}) - 2 \varphi(v_n) + \varphi(v_{n-1}))  + \alpha \beta'(u_n) \sqrt{(\varphi(v_{n+1}) - \varphi(v_n))(\varphi(v_n) - \varphi(v_{n-1}))} + \Bigr.\\
\Bigl. + 2 \beta''(u_n)(\varphi(v_{n+1}) - \varphi(v_n))(\varphi(v_n) - \varphi(v_{n-1})) \Bigr) \varphi'(v_n)v_{n,y},
\end{multline}
\begin{multline} \label{HH10}
v_{n,xy} = \beta(u_n) \delta e^{\beta(u_n) \sqrt{(\varphi(v_{n+1}) - \varphi(v_n))(\varphi(v_n) - \varphi(v_{n-1}))}}  \times \\
\times \left( \beta(u_n)(\varphi(v_{n+1}) - 2 \varphi(v_n) + \varphi(v_{n-1})) + \beta'(u_n)(\varphi(v_{n+1}) - \varphi(v_n))(\varphi(v_n) - \varphi(v_{n-1}))\right) \times\\
\times \left( \varphi'(v_n) v_{n,y} + \frac{\gamma}{\beta(u_n)}  \right).
\end{multline}
Thus, we can apply Proposition~\ref{P2}. A simple analysis shows that equations (\ref{HH9}), (\ref{HH10}) are never reduced to (\ref{Prop2_eq1}), (\ref{Prop2_eq2}). Let us calculate  mixed partial derivatives of the right-hand sides $\tilde{f}$, $\tilde{\tilde{f}}$ of (\ref{HH7}), (\ref{HH8}), respectively:
\begin{eqnarray}
\tilde{f}_{v_{n+1},v_{n-1}} = -\beta'(u_n) \beta''(u_n)\varphi'(v_{n+1})\varphi'(v_{n-1}) \varphi'(v_n) v_{n,y}, \label{tilde_f} \\
 \tilde{\tilde{f}}_{v_{n+1},v_{n-1}} = - \beta(u_n) \beta'(u_n) \varphi'(v_{n+1})\varphi'(v_{n-1}) \varphi'(v_n) v_{n,y}. \label{cap_f}
\end{eqnarray}
Due to formulae  (\ref{Prop2_eq1}), (\ref{Prop2_eq2}) the equalities $\tilde{f}_{v_{n+1},v_{n-1}} = 0$, $\tilde{\tilde{f}}_{v_{n+1},v_{n-1}} = 0$ should be satisfied. Consequently, we have $\beta'' = 0$ in case (\ref{tilde_f}) and $\beta' = 0$ in case (\ref{cap_f}). Thus, equations (\ref{H7}), (\ref{H8}) take the form
\begin{eqnarray}
&u_{n,xy} = c_1 (u_{n+1} - 2u_n + u_{n-1})u_{n,y},  \label{sec43_last}\\
&u_{n,xy} = (\gamma + c_1 u_{n,y})(u_{n+1} - 2 u_n + u_{n-1} + \frac{\delta}{c_1 u + c_2}). \label{sec43_last2}
\end{eqnarray}
Due to Proposition~\ref{P2} both of these equations have to concide with equation (\ref{Prop2_eq2}). Thus, we conclude that $\delta = 0$ in (\ref{sec43_last2}). Then (\ref{sec43_last2}) is reduced to (\ref{Prop2_eq2}) by the changes of  variables $u_n \rightarrow u_n - \frac{\gamma}{c_1} y$, $x \rightarrow \frac{x}{c_1}$.
Equation (\ref{sec43_last}) is reduced to (\ref{Prop2_eq2}) by the change of  variables $x \rightarrow \frac{x}{c_1}$.
Ultimately, both of them are equivalent to the third case of Theorem \ref{T1}.
%These lattices were known before \cite{SY}.

\subsection{Subcase 4}
\label{sec:s4}

It is readily seen that equations of Subcase 4 give rise to lattice equations of type (\ref{Prop1_eq_gen}). For this class  the complete classification is given in \cite{HP1} (see Proposition~\ref{P1} above). Hence investigation of the equations  (\ref{e11})--(\ref{e16}) is straightforward. This will result in the last three cases of Theorem~\ref{T1}.
Let us first concentrate on the equations (\ref{e11})--(\ref{e13}), which are integrable for a certain choice of parameters. 
They correspond to lattice  equations of the form:
\begin{equation}
u_{n,xy} = \frac{2(u_{n+1} - 2 u_n + u_{n-1}) + (4 \beta'(u_n) - \alpha) \sqrt{(u_{n+1} - u_n)(u_n - u_{n-1})} + 2 \beta \beta' - \alpha \beta}{2 \left( \sqrt{(u_{n+1} - u_n)(u_n - u_{n-1})} + \beta(u_n) \right)^2} u_{n,x} u_{n,y},  \label{H11}
\end{equation} 
\begin{multline}	\label{H12}
u_{n,xy} = \frac{u_{n,x} u_{n,y} + \beta(u_n) u_{n,x}}{\left( \sqrt{(u_{n+1}-u_n)(u_n - u_{n-1})} + \gamma \beta(u_n) \right)^2} (u_{n+1} - 2 u_n + u_{n-1}) + \\
+ \frac{(4 \gamma \beta'(u_n) - \alpha)\sqrt{(u_{n+1} - u_n)(u_n - u_{n-1})} + 2 \gamma^2 \beta(u_n) \beta'(u_n) - \alpha \gamma \beta(u_n) }{2 \left( \sqrt{(u_{n+1} - u_n)(u_n - u_{n-1})} + \gamma \beta(u_n) \right)^2} u_{n,x} u_{n,y} - \\
- \frac{2 \beta'(u_n)(u_{n+1} - u_n)(u_n - u_{n-1}) + \alpha \beta(u_n) \sqrt{(u_{n+1} - u_n)(u_n - u_{n-1})} + \alpha \gamma \beta^2(u_n) }{ 2   \left( \sqrt{(u_{n+1} - u_n)(u_n - u_{n-1})} + \gamma \beta(u_n)   \right)^2} u_{n,x},
\end{multline}
\begin{multline}		
u_{n,xy} = \frac{(u_x + \beta(u_n))(u_y + \delta \beta(u_n))}{\left( \sqrt{(u_{n+1} - u_n)(u_n - u_{n-1})} + \gamma \beta(u_n) \right)^2} (u_{n+1} - 2 u_n + u_{n-1}) + \\
+ \frac{(4 \gamma \beta'(u_n) - \alpha)\sqrt{(u_{n+1} - u_n)(u_n - u_{n-1})} + 2 \gamma^2 \beta(u_n) \beta'(u_n) - \alpha \gamma \beta(u_n)}{2 \left( \sqrt{(u_{n+1} - u_n)(u_n - u_{n-1})} + \gamma \beta(u_n) \right)^2} u_{n,x} u_{n,y} - \\
- \frac{2 \beta'(u_n) (u_{n+1} - u_n)(u_n - u_{n-1}) + \alpha \beta(u_n) \sqrt{(u_{n+1} - u_n)(u_n - u_{n-1})} + \alpha \gamma \beta^2(u_n)}{2 \left( \sqrt{(u_{n+1} - u_n)(u_n - u_{n-1})} + \gamma \beta(u_n) \right)^2} \times \\
\times \left( u_y + \delta u_x + \delta \beta(u_n) \right).		\label{H13}
\end{multline}
More precisely, equations (\ref{H11}), (\ref{H12}) are integrable only when $\alpha = \beta = 0$, and then they coincide with (\ref{Prop1_eq1}). Equation (\ref{H13}) is more generic, its integrable cases are obtained by setting:
\begin{eqnarray*}
&&i)\, \alpha=0,\, \beta = 0;\\
&&ii)\, \alpha = 0,\, \beta = 0, \, \gamma = 0, \, \delta = 0;\\
&&iii)\, \alpha = 0, \, \gamma = 0, \, \delta \neq 0, \, \beta = -u_n;\\
&&iv)\,  \alpha = 0, \, \gamma = 0, \, \delta \neq 0, \, \beta = -u^2_n - 1.
\end{eqnarray*}
In the first two cases we get equation (\ref{Prop1_eq1}); for $iii)$ and $iv)$ equation (\ref{H13}) is reduced to (\ref{Prop1_eq2}) and, respectively, to (\ref{Prop1_eq3}). 
As for equations (\ref{e14})--(\ref{e16}), they are not integrable since they don't satisfy the integrability conditions derived in \cite{HP1}.

Modulo elementary equivalence transformations, the final list of lattice equations (\ref{f}) passing both tests is as follows:
\begin{itemize}
\item[$1)$] $u_{n,xy} = e^{u_{n+1} - 2 u_n + u_{n-1} },$
\item[$2)$] $u_{n,xy} = e^{u_{n+1}} - 2 e^{u_n} + e^{u_{n-1}},$
\item[$3)$] $u_{n,xy} = \left(u_{n+1} - 2 u_n + u_{n-1}  \right) u_{n,x}, $
\item[$4)$] $u_{n,xy}=\alpha_nu_{n,x}u_{n,y}, \quad \alpha_n = \frac{1}{u_n - u_{n-1}} - \frac{1}{u_{n+1}-u_n}=\frac{u_{n+1} - 2 u_n + u_{n-1}}{(u_{n+1}-u_n)(u_n - u_{n-1})},
$
\item[$5)$] $u_{n,xy} = \alpha_n(u_{n,x} - u_n)(u_{n,y}-u_n) +u_{n,x}+ u_{n,y} -u_n, $
\item[$6)$] $u_{n,xy} = \alpha_n(u_{n,x} - u^2_n - 1)(u_{n,y} - u^2_n - 1) + 2 u_n(u_{n,x}+u_{n,y}-u^2_n - 1).$
\end{itemize}
All of them are equivalent to the normal forms of Theorem \ref{T1}. 

\medskip

\noindent{\bf Remark.} Note that equation 2, which is not exactly of the form (\ref{f}), can be transformed into this form: rewriting it as $u_{xy}=\triangle_{z\bar z}e^u$ and setting $u=\ln v$ we obtain
$$
v_{xy}=\frac{v_xv_y}{v}+v\triangle_{z\bar z}v.
$$
On the other hand, the integrable equation $u_{n,xy} = e^{u_{n+1} - u_n} - e^{u_n - u_{n-1}}$, which is in fact equivalent to equation 2, is formally not of type (\ref{f}) and therefore does not appear on the list. Integrable lattice (\ref{Prop2_eq1}) is also absent from the list for the same reason.

%%%%%%%%%%%%%%%%
\section*{Acknowledgements}
We thank Maxim Pavlov for useful discussions. The research of EVF was  supported by the EPSRC grant  EP/N031369/1.

%%%%%%%%%%%%%%%%

\end{document}